\newtheorem{theorem}{Theorem}
\newtheorem{corollary}[theorem]{Corollary}
\newtheorem{adefinition}[theorem]{Definition}
\newenvironment{definition}{\begin{adefinition}\rm}{\end{adefinition}}
\newtheorem{aexample}[theorem]{Example}
\newtheorem{lemma}[theorem]{Lemma}
\newtheorem{proposition}[theorem]{Proposition}
\newtheorem{aremark}[theorem]{Remark}
\numberwithin{equation}{section} \numberwithin{theorem}{section}
\newenvironment{proof}[1][Proof]{\textbf{#1.} }{\ \rule{0.5em}{0.5em}}
\DeclareMathOperator\Tr{Tr}
\begin{document}

\title{On a Limiting Distribution of Singular Values \\
of Random Band Matrices }
\author{A. Lytova \\
{\small {\textit{Department of Mathematical and Statistical Sciences,
University of Alberta}} }\\
{\small {\textit{Edmonton, Alberta, Canada, T6G 2G1}} }\\
{\small {E-mail: lytova@ualberta.ca} }\\
\\
L. Pastur \\
{\small {\textit{Theoretical Division, B. Verkin Institute for Low
Temperature Physics }} }\\
{\small {\textit{and Engineering,}} }\\
{\small {\textit{47 Lenin Ave., Kharkiv 61103, Ukraine}} }\\
{\small {E-mail: pastur@ilt.kharkov.ua}}}
\date{}
\maketitle

\begin{abstract}
An equation is obtained for the Stieltjes transform of the normalized
distribution of singular values of non-symmetric band random matrices in the
limit when the band width and rank of the matrix simultaneously tend to
infinity. Conditions under which this limit agrees with the quarter-circle
law are found. An interesting particular case of lower triangular random
matrices is also considered and certain properties of the corresponding
limiting singular value distribution are given.
\medskip

{\it Key words:} random band matrices, triangular matrices, limiting distribution of singular values.
\medskip

{\it Mathematics Subject Classifcation 2010:} 60B10, 60B20 (primary); 15B52, 15A18
(secondary).
\end{abstract}

\section{Introduction: Problem and Main Results}

\label{intro} Given a positive integer $n=2m+1$, $m\in \mathbb{N}$ consider
the $n\times n$ random matrix
\begin{equation}
A_{n}=\{A_{jk}^{(n)}\}_{|j|,|k|\leq m},\quad A_{jk}^{(n)}=b_{n}^{-1/2}v\big({%
(j-k)}/{b_{n}}\big)a_{jk}^{(n)},  \label{Ajk}
\end{equation}%
where $\{a_{jk}^{(n)}\}_{|j|,|k|\leq m}$ are real random variables, $%
\{b_{n}\}$ is a sequence of positive integers such that 
\begin{equation}
\lim_{n \to \infty} b_{n}=\infty, \quad \nu := \lim_{n\rightarrow \infty
}\nu _{n}\in [1,\infty], \quad \nu _{n}=n/2b_{n},  \label{nu}
\end{equation}%
$v:$ $\mathbb{R\rightarrow R}$ is a piecewise continuous function of compact
support and we denote
\begin{align}
& \int_{-\nu }^{\nu }v^{2}(t)dt=w^{2}<\infty ,  \label{w2} \\
& \max_{t\in \mathbb{R}}v^{2}(t)=K<\infty .  \label{K}
\end{align}%
In particular, if $v=\chi _{\lbrack 0,1]}$, the indicator of the interval $%
[0,1]$, then the matrix elements $\{A_{jk}^{(n)}\}_{|j|,|k|\leq m}$ are
non-vanishing only in the "band" of the width $b_{n}$ under the principal
diagonal. If in addition $2\nu =1$, then $A_{n}$ is a lower triangular
matrix asymptotically.

We are interested in the limiting distribution of the squares of singular
values of $A_{n}$, i.e., the eigenvalues
\begin{equation}
0\leq \lambda _{1}^{(n)}\leq ...\leq \lambda _{n}^{(n)}<\infty  \label{eigv}
\end{equation}%
of the positive definite random matrix%
\begin{equation}
M_{n}=A_{n}A_{n}^{T}.  \label{Mn}
\end{equation}%
To this end we introduce the Normalized Counting Measure $N_{n}$ of (\ref%
{eigv}), setting for any interval $\Delta \subset \mathbb{R}$
\begin{equation}
N_{n}(\Delta )=\mathrm{Card}\{l\in \lbrack 1,n]:\lambda _{l}^{(n)}\in \Delta
\}/n.  \label{NCM}
\end{equation}%
It is convenient to write matrix $M_{n}$ in the form
\begin{equation}
M_{n}=\sum_{|k|\leq m}\mathbf{y}_{k}\otimes \mathbf{y}_{k},  \label{Mny}
\end{equation}%
where
\begin{equation}
\mathbf{y}_{k}=(A_{-mk}^{(n)},...,A_{mk}^{(n)})^{T}  \label{yk}
\end{equation}%
are the columns of $A_{n}$ (see (\ref{Ajk})).
According to (\ref{Ajk}), each $\mathbf{y}_{k}$ corresponds to the vector
\begin{equation}
\mathbf{a}_{k}=(a_{-mk}^{(n)},...,a_{mk}^{(n)})^{T}.  \label{ak}
\end{equation}%
We will assume that $\{\mathbf{a}_{k}\}_{|k|\leq m}$ are jointly independent
random vectors, however the components of each vector $\mathbf{a}_{k}$ can
be dependent. Here are the corresponding definitions \cite{GLPP:14,Pa-Pa:09}.

\begin{definition}
(i). [Isotropic vectors] \label{d:isotropic} A random vector $\mathbf{a}%
=(a_{-m},...,a_{m})\in \mathbb{R}^{n}$ is called \textit{isotropic} if
\begin{equation}
\mathbf{E}\{a_{j}\}=0,\quad \mathbf{E}\{a_{j}a_{k}\}=\delta _{jk},\quad
|j|,|k|\leq m.  \label{iso}
\end{equation}

(ii). [Unconditional distribution] The distribution of random vector $%
\mathbf{a}\in \mathbb{R}^{n}$ is called \textit{unconditional} if its
components $(a_{-m},...,a_{m})$ have the same joint distribution as $(\pm
a_{-m},...,\pm a_{m})$ for any choice of signs.

(iii). [Log-concave measure] A measure $\mu $ on $\mathbb{C}^{n}$ is \textit{%
log-concave} if for any measurable subsets $A,B$ of $\mathbb{C}^{n}$ and any
$\theta \in \lbrack 0,1]$,
\begin{equation*}
\mu (\theta A+(1-\theta )B)\geq \mu (A)^{\theta }\mu (B)^{(1-\theta )}
\end{equation*}%
whenever $\theta A+(1-\theta )B=\{\theta X_{1}+(1-\theta )X_{2}\,:\,X_{1}\in
A,\;X_{2}\in B\}$ is measurable.
\end{definition}

\begin{definition}
\lbrack Good vectors] \label{d:good} We say that a random vector $\mathbf{a}%
=(a_{-m}^{(n)},...,a_{m}^{(n)})\in \mathbb{R}^{n}$ is \textit{good,} if
it is an isotropic vector with an unconditional distribution satisfying the
moment conditions
\begin{equation}  \label{m22}
m_{2,2}^{(n)}= \mathbf{E}\{(a_{j}^{(n)})^2 (a_{k}^{(n)})^2\}=1+o(1),\;j\neq
k,\quad m_{4}^{(n)} = \mathbf{E}\{(a_{j}^{(n)})^4\}=O(1)
\end{equation}%
as $n=2m+1\rightarrow \infty $. 
Note that $m_{2,2}^{(n)}$ and $m_{4}^{(n)}$ do not depend on $j$ and $k$.
\end{definition}

A simple example of good vectors are the vectors with i.i.d. $n$-independent
components of zero mean and unit variance. An important case is given by the
isotropic random vectors with symmetric unconditional and log-concave
distributions (see Lemma 2.1 of \cite{GLPP:14}), the simplest among them are
the vectors uniformly distributed over the unit sphere in $\mathbb{R}^{n}$.
Now we are ready to formulate our main results.

\begin{theorem}
\label{t:main} Let $M_{n},\;n=2m+1$, $m\in \mathbb{N}$ be the random matrix (%
\ref{Mny}) -- (\ref{ak}), where for every $m$ $\{\mathbf{a}_{k}\}_{|k|\leq
m} $ are jointly independent good vectors (see Definition \ref{d:good}) and
corresponding vectors $\{\mathbf{y}_{k}\}_{|k|\leq m}$ are defined in (\ref%
{yk}) and (\ref{Ajk}) -- (\ref{K}). Let $N_{n}$ be the Normalized Counting
Measure (\ref{NCM}) of eigenvalues of $M_{n}$.. Then there exists a
non-random and non-negative measure $N$, $N(\mathbb{R})=1$ such that for any
interval $\Delta \subset \mathbb{R}$ we have in probability
\begin{equation}
\lim_{n\rightarrow \infty }N_{n}(\Delta )=N(\Delta ).  \label{NnN}
\end{equation}%
The limiting measure $N$ is uniquely defined via its Stieltjes transform $f$
(see \cite{Ak-Gl:93,Pa-Sh:11})
\begin{equation}
f(z)=\int_{0}^{\infty }\frac{N(d\lambda )}{\lambda -z},\;\Im z\neq 0,
\label{f}
\end{equation}%
by the formula
\begin{equation}
\int_{\mathbb{R}}\varphi (\lambda )N(d\lambda )=\lim_{\varepsilon
\rightarrow 0^{+}}\frac{1}{\pi }\int_{\mathbb{R}}\varphi (\lambda )\Im
f(\lambda +i\varepsilon )d\lambda ,\quad \forall \varphi \in C_{0}(\mathbb{R}%
),  \label{SP}
\end{equation}%
and we have:

\smallskip (i)\ if $\nu <\infty $ in (\ref{nu}), then
\begin{equation}
f(z)=\frac{1}{2\nu }\int_{-\nu }^{\nu }f(t,z)dt,  \label{ftz}
\end{equation}%
where $f:\lbrack -\nu ,\nu ]\times \mathbb{C}\setminus \lbrack {0},\infty
)\rightarrow \mathbb{C}$ on the right of the formula is continuous in $t$
for every $z\in \mathbb{C}\setminus \lbrack {0},\infty )$, is analytic in $z$
for every $|t|\leq \nu $, has the property%
\begin{equation}
\Im f(t,z)\Im z\geq 0,\quad |f(t,z)|\leq |\Im z|^{-1},\;\Im z\neq 0,
\label{Imzf}
\end{equation}%
and is the unique solution of the equation
\begin{equation}
f(t,z)=-\Bigg(z-\int_{-\nu }^{\nu }\frac{v^{2}(t-\tau )d\tau }{1+\int_{-\nu
}^{\nu }v^{2}(\theta -\tau )f(\theta ,z)d\theta }\Bigg)^{-1}.  \label{eq}
\end{equation}

\smallskip (ii) \ if $\nu =\infty $ in (\ref{nu}), then $f$ of (\ref{f}) is
the unique solution of the quadratic equation
\begin{equation}
zw^{2}f^{2}+zf+1=0  \label{eqc}
\end{equation}%
in the class of analytic in $\mathbb{C\setminus \mathbb{R}}$ functions
satisfying
\begin{equation}
\Im f(z)\Im z\geq 0,\quad |f(z)|\leq |\Im z|^{-1},\;\Im z\neq 0,  \label{Nev}
\end{equation}%
and we have the following formula for the density $\rho _{qc}$ of the
limiting measure $N$:
\begin{equation}
N(d\lambda )=\rho _{qc}(\lambda )d\lambda ,\quad \rho _{qc}(\lambda )=(2\pi
w^{2})^{-1}\sqrt{(4w^{2}-\lambda )/\lambda }\mathbf{1}_{[0,4w^{2}]},
\label{qc}
\end{equation}%
known as the quarter-circle law.
\end{theorem}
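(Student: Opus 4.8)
The plan is to follow the standard Marchenko--Pastur / Stieltjes-transform strategy adapted to the band structure. Write $G_n(z)=(M_n-z)^{-1}$ for the resolvent and $g_n(z)=n^{-1}\Tr G_n(z)$ for the normalized trace, so that $\mathbf{E}\{g_n(z)\}$ is the Stieltjes transform of $\mathbf{E}\{N_n\}$. The first step is concentration: show that $g_n(z)-\mathbf{E}\{g_n(z)\}\to 0$ in probability (indeed in $L^2$) uniformly on compact subsets of $\mathbb{C}\setminus\mathbb{R}$. Since the columns $\mathbf{y}_k$ are jointly independent, this follows from a martingale-difference decomposition with respect to the filtration generated by $\mathbf{y}_{-m},\dots,\mathbf{y}_m$, using the rank-one nature of each $\mathbf{y}_k\otimes\mathbf{y}_k$ so that each increment is $O(1/n)$; this also gives (\ref{NnN}) once we identify the limit of $\mathbf{E}\{g_n\}$. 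In parallel we record the a priori bound $|g_n(z)|\le|\Im z|^{-1}$ coming from positivity of $M_n$.

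The core of the argument is deriving the limiting equation for $f_n(z):=\mathbf{E}\{g_n(z)\}$. Using the resolvent identity $M_nG_n=I+zG_n$ together with (\ref{Mny}), one gets $zG_n = -I + \sum_{|k|\le m}(\mathbf{y}_k\otimes\mathbf{y}_k)G_n$. Taking the trace and applying the rank-one perturbation formula $(\mathbf{y}_k,G_n\mathbf{y}_k)=(\mathbf{y}_k,G_n^{(k)}\mathbf{y}_k)/(1+(\mathbf{y}_k,G_n^{(k)}\mathbf{y}_k))$, where $G_n^{(k)}$ is the resolvent with the $k$-th term removed, reduces everything to controlling the quadratic forms $(\mathbf{y}_k,G_n^{(k)}\mathbf{y}_k)$. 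Here the goodness of $\mathbf{a}_k$ enters: isotropy and the moment bounds (\ref{m22}) give, via the standard variance estimate for quadratic forms in isotropic/unconditional vectors (Lemma 2.1-type bounds of \cite{GLPP:14}), that $(\mathbf{y}_k,G_n^{(k)}\mathbf{y}_k)$ concentrates on its conditional expectation $\sum_{|j|\le m}(A_{jk}^{(n)})^2 (G_n^{(k)})_{jj} = b_n^{-1}\sum_{|j|\le m} v^2((j-k)/b_n)(G_n)_{jj}+o(1)$, the last step using again that removing one column perturbs the diagonal entries by $O(1/n)$ in an averaged sense. Thus the problem becomes a closed system for the ``profile'' $f_n(t,z)$ interpolating the diagonal entries $(G_n)_{jj}$ at $j\approx t b_n$: introducing a smooth test-function pairing against $j/b_n$, a Riemann-sum argument turns $b_n^{-1}\sum_j v^2((j-k)/b_n)(\cdot)$ into the integral $\int_{-\nu}^{\nu} v^2(t-\tau)(\cdot)\,d\tau$, yielding exactly (\ref{eq}) in the limit, with (\ref{ftz}) obtained by integrating over $t\in[-\nu,\nu]$.

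I would then treat the two regimes. For $\nu<\infty$ (part (i)): establish existence and uniqueness of the solution of (\ref{eq}) in the stated Nevanlinna-type class by a fixed-point/contraction argument — the map $f(\cdot,z)\mapsto$ RHS of (\ref{eq}) is a contraction on the space of functions with $|f|\le|\Im z|^{-1}$ when $|\Im z|$ is large, the bound $K<\infty$ from (\ref{K}) controlling the kernel $v^2$, and then uniqueness propagates to all of $\mathbb{C}\setminus[0,\infty)$ by analyticity; the property (\ref{Imzf}) is preserved under the iteration because each term $(1+\int v^2 f)^{-1}$ maps the upper half-plane appropriately. For $\nu=\infty$ (part (ii)): formally take $\nu\to\infty$ in (\ref{ftz})--(\ref{eq}); by translation-invariance of the limiting kernel the profile becomes $t$-independent, $\int_{-\nu}^{\nu} v^2(\theta-\tau)\,d\theta\to w^2$ by (\ref{w2}), and (\ref{eq}) collapses to $f = -(z - w^2/(1+w^2 f))^{-1}$, which rearranges to $zw^2f^2+zf+1=0$, i.e. (\ref{eqc}); selecting the root in the class (\ref{Nev}) and inverting via the Stieltjes--Perron formula (\ref{SP}) gives the quarter-circle density (\ref{qc}) by a direct computation of $\Im f(\lambda+i0)$. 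The main obstacle is the second step — rigorously justifying the Riemann-sum passage from the discrete band average to the integral operator uniformly in $z$ on compacts, and simultaneously controlling the error terms in the quadratic-form concentration when the vector components are only ``good'' rather than i.i.d.; this is where the unconditional symmetry and the $m_{2,2}=1+o(1)$ hypothesis must be used carefully to kill the off-diagonal contributions $\sum_{j\ne j'}A_{jk}A_{j'k}(G_n^{(k)})_{jj'}$.
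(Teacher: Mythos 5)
Your proposal is correct and follows essentially the same route as the paper: martingale concentration for the self-averaging of $N_n$, the resolvent identity combined with the rank-one perturbation formula, concentration of the quadratic forms $(G^{(k)}\mathbf{y}_k,\mathbf{y}_k)$ using isotropy/unconditionality and the moment hypotheses of Definition~\ref{d:good}, a piecewise-constant profile $f_n(t,z)$ turning the band sums into the integral kernel of (\ref{eq}), a contraction argument on $\{z:\Re z=0,|\Im z|\ge K_0\}$ for existence and uniqueness, and the collapse to the $t$-independent quadratic (\ref{eqc}) when $\nu=\infty$. One small imprecision: removing a column perturbs the diagonal resolvent entries by $O(b_n^{-1})$, not $O(n^{-1})$ (cf.\ (\ref{GGk})), which is still $o(1)$ since $b_n\to\infty$.
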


\begin{theorem}
\label{t:indep} The results of Theorem \ref{t:main} remain valid if $\{%
\mathbf{a}_{k}\}_{|k|\leq m}$ are independent isotropic random vectors with
independent components having finite absolute moment of the order $%
2+\varepsilon $, $\varepsilon >0$,
\begin{equation}
\sup_{n} \max_{|j|,|k|\leq m}\mathbf{E}\{|a_{kj}^{(n)}|^{2+\varepsilon
}\}<\infty .  \label{a2e}
\end{equation}
\end{theorem}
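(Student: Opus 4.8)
The plan is to derive Theorem~\ref{t:indep} from Theorem~\ref{t:main} by a truncation argument: we replace the components $a_{jk}^{(n)}$ by bounded, recentered and renormalized ones, show that this does not affect the limit of $N_n$, and observe that the truncated ensemble is covered by the proof of Theorem~\ref{t:main}.

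First I would fix a sequence $\delta_n\to0$ with $\delta_n\sqrt{b_n}\to\infty$ (possible since $b_n\to\infty$), set $C_n=\delta_n\sqrt{b_n}$, and put
\begin{equation*}
\check a_{jk}=\big(a_{jk}^{(n)}\mathbf{1}_{\{|a_{jk}^{(n)}|\le C_n\}}-\mathbf{E}\{a_{jk}^{(n)}\mathbf{1}_{\{|a_{jk}^{(n)}|\le C_n\}}\}\big)/s_{jk},\quad s_{jk}^2=\mathbf{Var}\{a_{jk}^{(n)}\mathbf{1}_{\{|a_{jk}^{(n)}|\le C_n\}}\},
\end{equation*}
and let $\check A_n$, $\check M_n=\check A_n\check A_n^T$ be built from $\{\check a_{jk}\}$ via (\ref{Ajk})--(\ref{Mn}). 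Condition (\ref{a2e}) yields uniform integrability of $\{|a_{jk}^{(n)}|^2\}$, hence $s_{jk}^2\to1$ and $\mathbf{E}\{a_{jk}^{(n)}\mathbf{1}_{\{|a_{jk}^{(n)}|\le C_n\}}\}\to0$ uniformly in $j,k$; thus $|\check a_{jk}|\le2C_n$ for large $n$ and $\varepsilon_n:=\sup_{|j|,|k|\le m}\mathbf{E}\{|a_{jk}^{(n)}-\check a_{jk}|^2\}\to0$. Since $v$ has compact support, $\sum_{|j|,|k|\le m}b_n^{-1}v^2((j-k)/b_n)=O(n)$, so (with $\|\cdot\|_2$ the Hilbert--Schmidt norm) $n^{-1}\mathbf{E}\{\|A_n-\check A_n\|_2^2\}=O(\varepsilon_n)\to0$ while $n^{-1}\mathbf{E}\{\|A_n\|_2^2\}$ and $n^{-1}\mathbf{E}\{\|\check A_n\|_2^2\}$ stay bounded. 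Denoting by $g_n,\check g_n$ the Stieltjes transforms of the Normalized Counting Measures of $M_n,\check M_n$, the resolvent identity together with Hölder's inequality for Schatten norms gives
\begin{equation*}
|g_n(z)-\check g_n(z)|\le\frac{\|M_n-\check M_n\|_1}{n|\Im z|^2}\le\frac{\|A_n-\check A_n\|_2\big(\|A_n\|_2+\|\check A_n\|_2\big)}{n|\Im z|^2},
\end{equation*}
whose right-hand side tends to $0$ in probability for every $z$, $\Im z\ne0$, by the Cauchy--Schwarz inequality. It therefore suffices to prove $\check N_n\to N$ in probability with the same limit $N$ as in Theorem~\ref{t:main}.

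This I would obtain by rerunning the proof of Theorem~\ref{t:main} for the ensemble $\check A_n$. Its columns $\check{\mathbf{a}}_k$ are jointly independent, isotropic, have independent components bounded by $2C_n$, and satisfy $\mathbf{E}\{\check a_{jk}^2\check a_{lk}^2\}=1$ exactly for $j\ne l$ and $\mathbf{E}\{\check a_{jk}^4\}\le4C_n^2=4\delta_n^2b_n=o(b_n)$. As the remark after Definition~\ref{d:good} already indicates for the i.i.d.\ case, independence of the components (with zero mean and unit variance) provides, in the proof of Theorem~\ref{t:main}, exactly the moment structure otherwise supplied by the unconditional assumption: isotropy and the factorization and vanishing of mixed moments of distinct components are what single out equation (\ref{eq}), whereas (\ref{m22}) enters only through variance bounds for quadratic forms of the type $(\mathbf{y}_k,G\,\mathbf{y}_k)$ in the resolvent $G=(M_n-z)^{-1}$ and its minors; since the compact support of $v$ restricts the relevant sums to $O(b_n)$ terms, these variances are $O\big((m_4^{(n)}+1)/b_n\big)$, which still tends to $0$ under $m_4^{(n)}=o(b_n)$. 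Hence the derivation of (\ref{ftz})--(\ref{qc}) carries over to $\check A_n$, giving $\check N_n\to N$ in probability, and together with the previous paragraph this yields (\ref{NnN}). I expect the main difficulty to be precisely this last step: one has to go through the proof of Theorem~\ref{t:main} and verify that everywhere it invokes $m_4^{(n)}=O(1)$ or the unconditional distribution the argument survives the weaker bound $m_4^{(n)}=o(b_n)$ and the substitution of independent components for an unconditional distribution -- i.e.\ re-examine the concentration estimates there.
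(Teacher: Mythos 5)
Your overall strategy—truncate, recenter, renormalize, estimate $|g_n-\check g_n|$ via Schatten norms, and then re-run the argument of Theorem~\ref{t:main} for the truncated ensemble—is a genuinely different route from the paper. The paper's proof of Theorem~\ref{t:indep} (Section~\ref{s:indep}) first uses a martingale argument for (\ref{EN}), then an \emph{interpolation trick} to reduce to i.i.d.\ Gaussian entries, and finally derives (\ref{1})--(\ref{2}) by Gaussian integration by parts and the Poincar\'e inequality (Proposition~\ref{p:Nash}); truncation is mentioned only to relax (\ref{a2e}) to a Lindeberg condition, not as the main engine. Your approach has the advantage of not needing the Gaussian comparison at all, at the price of having to re-verify every estimate in the proof of Theorem~\ref{t:main} under weaker moment hypotheses. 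This is a legitimate alternative, but you have understated what that re-verification requires.

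The specific gap is the quantitative claim that $m_4^{(n)}=o(b_n)$ suffices. It is enough to make the variance quantities $\delta_n$ in Lemma~\ref{l:aux} tend to zero, since indeed (\ref{Gyy=}) gives $\delta_n=O((1+m_4^{(n)})/b_n)$. But the proof of Theorem~\ref{t:main} uses these variance bounds inside remainder estimates that pick up an extra factor of $\sqrt{1+m_4^{(n)}}$. Concretely, for the truncated ensemble with independent components, (\ref{Bk<}) becomes
\begin{equation*}
\mathbf{E}\{|B_{kn}|^2\}\le\frac{K(1+m_4^{(n)})\,v_{jk}^2}{b_n^2|\Im z|^2},
\end{equation*}
because the diagonal term $s=t=j$ contributes $m_4^{(n)}|G_{jj}^k|^2v_{jk}^2$. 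Plugging this and $\mathbf{E}\{|(A_{kn})^\circ|^2\}^{1/2}=O(\sqrt{(1+m_4^{(n)})/b_n})$ into (\ref{rn<}) and summing the $O(b_n)$ non-zero values of $v_{jk}$ yields
\begin{equation*}
|r_n|=O\Big(\frac{1+m_4^{(n)}}{\sqrt{b_n}}\Big),
\end{equation*}
and the same order appears in the error terms feeding into $\mathcal{T}_1$ in the proof of (\ref{Vpp}). Thus one actually needs $m_4^{(n)}=o(\sqrt{b_n})$, not merely $o(b_n)$, for (\ref{rn0}) and Lemma~\ref{l:aux} to go through along the paper's lines. With your crude bound $m_4^{(n)}\le 4C_n^2=4\delta_n^2 b_n$ this forces $\delta_n=o(b_n^{-1/4})$, which is \emph{not} implied by your hypotheses $\delta_n\to0$, $\delta_n\sqrt{b_n}\to\infty$; it is compatible with them (e.g.\ $\delta_n=b_n^{-3/8}$), but you must say so and check that the truncation error estimates survive the smaller $C_n$. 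You flagged this step as the likely difficulty, which is right—but the rate you propagated through is off, and that is exactly where the argument would break if taken literally.

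Separately, the paper's Lindeberg-type comment, together with its interpolation strategy, already handles the passage from (\ref{a2e}) to Gaussian entries once the second-moment matching is in place; if you want to avoid interpolation, you still need the sharper truncation rate above, and you should also note that after recentering and renormalization the truncated components remain independent (so that (\ref{aaaa}) holds with $m_{2,2}^{(n)}=1$ exactly) — this part of your argument is fine and is the correct replacement for the unconditionality assumption.
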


\begin{corollary}
\label{l:qc} Under conditions of Theorem \ref{t:main} (or Theorem \ref%
{t:indep}) with $\nu <\infty $ in (\ref{nu}) $N$ is the quarter-circle law
if and only if the function $v^{2}:[-2\nu ,2\nu ]\rightarrow \mathbb{R}_{+}$
is the restriction on the interval $[-2\nu ,2\nu ]$ of a $2\nu $-periodic
function.
\end{corollary}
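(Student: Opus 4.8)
\medskip
\noindent\textbf{Proof plan.}
The plan is first to recast the periodicity condition analytically, then to use equation (\ref{eq}) for the ``if'' direction and the first two moments of $N$ for the ``only if'' direction. Write $V:=v^{2}$ and, for $|t|\le\nu$,
\begin{equation*}
W(t):=\int_{-\nu}^{\nu}V(t-\tau)\,d\tau=\int_{t-\nu}^{t+\nu}V(s)\,ds .
\end{equation*}
Then $W$ is Lipschitz, $W(0)=\int_{-\nu}^{\nu}V(s)\,ds=w^{2}$ by (\ref{w2}), and $W'(t)=V(t+\nu)-V(t-\nu)$ off a finite set; consequently $W\equiv w^{2}$ on $[-\nu,\nu]$ if and only if $V(t-\nu)=V(t+\nu)$ off a finite set, i.e.\ $V(s)=V(s-2\nu)$ for $s\in[0,2\nu]$, which is precisely the assertion that $V|_{[-2\nu,2\nu]}$ is the restriction of a $2\nu$-periodic function. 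Hence it suffices to prove $N=\rho_{qc}\iff W\equiv w^{2}$ on $[-\nu,\nu]$. (Since Theorem \ref{t:indep} asserts that equation (\ref{eq}), its uniqueness and all the conclusions used below remain valid under its hypotheses, the two theorems need not be treated separately.)

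For ``$W\equiv w^{2}\Rightarrow N=\rho_{qc}$'' I would substitute the $t$-independent ansatz $f(t,z)\equiv g(z)$ into the right-hand side of (\ref{eq}), where $g$ is the unique solution of (\ref{eqc}) in the Nevanlinna class (\ref{Nev}). Using $W\equiv w^{2}$ (which also yields $\int_{-\nu}^{\nu}V(\theta-\tau)\,d\theta=W(-\tau)=w^{2}$ for every $|\tau|\le\nu$), the inner denominator in (\ref{eq}) collapses to $1+w^{2}g(z)$ and the outer $\tau$-integral to $w^{2}(1+w^{2}g(z))^{-1}$, so (\ref{eq}) becomes $g=-\bigl(z-w^{2}(1+w^{2}g)^{-1}\bigr)^{-1}$, which after clearing denominators is exactly (\ref{eqc}). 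Being independent of $t$, continuous in $t$, analytic in $z$, and subject to (\ref{Imzf}), the function $g$ meets every hypothesis of the uniqueness statement in Theorem \ref{t:main}(i); therefore $f(t,z)=g(z)$ for all $t$, so by (\ref{ftz}) $f(z)=(2\nu)^{-1}\int_{-\nu}^{\nu}g(z)\,dt=g(z)$ solves (\ref{eqc}), and (\ref{qc}) identifies $N$ with the quarter-circle law.

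For the converse I would read off the first two moments of $N$ from (\ref{eq}). For large $|z|$ a contraction argument applied to (\ref{eq}) on bounded functions of order $|z|^{-1}$ gives the expansion $f(t,z)=-z^{-1}-M_{1}(t)z^{-2}-M_{2}(t)z^{-3}-\cdots$, uniformly in $t$, with
\begin{equation*}
M_{1}(t)=W(t),\qquad M_{2}(t)=W(t)^{2}+\int_{-\nu}^{\nu}V(t-\tau)W(-\tau)\,d\tau ,
\end{equation*}
the second formula following from $\bigl(1+\int_{-\nu}^{\nu}V(\theta-\tau)f(\theta,z)\,d\theta\bigr)^{-1}=1+z^{-1}W(-\tau)+O(z^{-2})$. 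Averaging over $t$ via (\ref{ftz}) and using $\int_{-\nu}^{\nu}V(t-\tau)\,dt=W(-\tau)$, one gets, with $\langle\,\cdot\,\rangle:=(2\nu)^{-1}\int_{-\nu}^{\nu}(\cdot)\,dt$,
\begin{equation*}
\int\lambda\,N(d\lambda)=\langle W\rangle ,\qquad \int\lambda^{2}\,N(d\lambda)=2\langle W^{2}\rangle .
\end{equation*}
Since the first two moments of $\rho_{qc}$ in (\ref{qc}) equal $w^{2}$ and $2w^{4}$, the hypothesis $N=\rho_{qc}$ forces $\langle W^{2}\rangle=w^{4}=\langle W\rangle^{2}$; the equality case of the Cauchy--Schwarz inequality then makes $W$ constant on $[-\nu,\nu]$, necessarily $\equiv W(0)=w^{2}$, which by the first paragraph is the claimed periodicity of $v^{2}$.

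I expect the main obstacle to be not conceptual but the analytic bookkeeping in this expansion step: establishing the term-by-term asymptotics of $f(t,z)$ at $z=\infty$ uniformly in $t$, justifying the interchange of that expansion with the $t$-average in (\ref{ftz}), and identifying the resulting coefficients $\langle M_{k}\rangle$ with the moments $\int\lambda^{k}N(d\lambda)$. All of this rests on the Nevanlinna bounds (\ref{Imzf}) together with the compactness of the support of $N$ (visible from (\ref{eqc}), and more generally from (\ref{eq})); everything else is elementary algebra.
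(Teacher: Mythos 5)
Your proposal is correct and follows essentially the same route as the paper's proof: identify $t$-independence of the solution of (\ref{eq}) with $W\equiv w^{2}$ (and hence with $2\nu$-periodicity of $v^{2}$) for the ``if'' direction, and compare the first two moments of $N$ with those of $\rho_{qc}$ via the Cauchy--Schwarz inequality for the ``only if'' direction. Your careful bookkeeping of the coefficient of $z^{-3}$, giving $\int\lambda^{2}\,N(d\lambda)=2\langle W^{2}\rangle$, is in fact the correct value: the paper's displayed formula $a_{2}=\frac{1}{2\nu}\int_{-\nu}^{\nu}u^{2}(t)\,dt$ is missing a factor of $2$, as one can see from the fact that the inequality $a_{1}^{2}<a_{2}/2$ used there is Cauchy--Schwarz only for $a_{2}=\frac{1}{\nu}\int_{-\nu}^{\nu}u^{2}(t)\,dt$.
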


In particular, if all entries $\{A_{jk}^{(n)}\}_{|j|,|k|\leq m}$ are
non-vanishing, then we get the quarter-circle law, and this fact was proved
long time ago \cite{Ma-Pa:67} (see also \cite{Pa-Pa:09,Pa-Sh:11}).

To prove the corollary, we note first that if $\nu <\infty $ and if $v^{2}$
is $2\nu $-periodic, then (\ref{eq}) has $t$-independent solution $f$,
satisfying (\ref{eqc}). If $v^{2}$ does not possess this property, then the
function%
\begin{equation*}
u(t)\equiv \int_{-\nu }^{\nu }v^{2}(t-\tau )d\tau
\end{equation*}%
cannot be a constant on the interval $(-\nu ,\nu )$. Hence, expanding the
solution $f$ of (\ref{eq}) in the inverse powers of $z$:
\begin{equation*}
f(z)=-\frac{1}{z}-\frac{a_{1}}{z^{2}}-\frac{a_{2}}{z^{3}}+O(z^{-4}),\quad
a_{1}=\frac{1}{2\nu }\int_{-\nu }^{\nu }u(t)dt,\quad a_{2}=\frac{1}{2\nu }%
\int_{-\nu }^{\nu }u^{2}(t)dt,
\end{equation*}%
and then applying the Schwarz inequality, we get the strict inequality
\begin{equation*}
a_{1}^{2}<a_{2}/2,
\end{equation*}%
if $u$ is not identically constant. On the other hand,\ we have for $f_{qc}$
of (\ref{qc})
\begin{equation*}
f_{qc}(z)=\frac{1}{2w^{2}}(-1+\sqrt{(1-4w^{2}/z)})=-\frac{1}{z}-\frac{w^{2}}{%
z^{2}}-\frac{2w^{4}}{z^{3}}+O(z^{-4}),\;z\rightarrow \infty ,
\end{equation*}%
so that $a_{1}^{2}=a_{2}/2$. Therefore, in the considered case of a
non-periodic $v^{2}$, the limiting Normalized Counting Measure $N$ of (\ref%
{NnN}) cannot be the quarter-circle law.

Note that the results of Theorem \ref{t:main} and ~ Corollary \ref{l:qc}
agree with those obtained in \cite{C-G:93} and \cite{MPK:92} for the Wigner
band matrices. Indeed, according to \cite{C-G:93} and \cite{MPK:92} the
Stieltjes transform of the Wigner matrices (i.e., matrices with independent,
modulo symmetry conditions, entries satisfying an analog of (\ref{a2e})) is
given by the same formula (\ref{ftz}), where now $f(t,z)$ solves uniquely
the equation (cf. (\ref{eq}))%
\begin{equation*}
f(t,z)=-\Bigg(z+\int_{-\nu }^{\nu }v^{2}(t-\tau )f(z,\tau )d\tau \Bigg)^{-1}
\end{equation*}%
in the same class of functions defined by (\ref{Imzf})).

Another corollary of Theorem \ref{t:main} yields the liming distribution of
singular values of lower triangular random matrices.

\begin{theorem}
\label{t:tri} Consider the case of Theorem \ref{t:main}, where $v=\chi
_{[0,1]}$ is the indicator of the interval $[0,1]$ and $2\nu =1$, so that
the matrices $A_{n}$ are lower triangular. Then:

\smallskip (i) \ the Stieltjes transform $f$ (\ref{f})~of the limiting
Normalized Counting Measure (\ref{NnN}) of eigenvalues of \ $M_{n}$ (\ref{Mn}%
) solves uniquely the equation
\begin{equation}
(1+f(z))\ln (1+f(z))=-z^{-1},  \label{eqtri}
\end{equation}%
in the class of functions analytic in $\mathbb{C}\setminus \lbrack 0,\infty
) $ and satisfying (\ref{Nev});

\smallskip (ii) \
\begin{equation}
\mathrm{supp}\;N=[0,e];  \label{supp}
\end{equation}

\smallskip (iii) \ the measure $N$ of (\ref{NnN}) is absolutely continuous
and its density $\rho $ has the following asymptotics at the endpoints of
its support $[0,e]$:%
\begin{eqnarray}  \label{N0}
\rho (\lambda )&=&\frac{1}{\lambda (\ln \lambda)^2 }(1+o(1)),\;\lambda
\downarrow 0, \\
\rho (\lambda )&=&\mathrm{const}\cdot (e-\lambda )^{1/2}(1+o(1)),\;\lambda
\uparrow e.\;  \notag
\end{eqnarray}

\smallskip (iv) \ moments of $N$, i.e., $\mu _{k}:=\lim_{n\rightarrow \infty
}n^{-1}\mathrm{Tr}M_{n}^{k}$ are%
\begin{equation}
\mu _{k}=\frac{k^{k}}{(k+1)!}.  \label{moms}
\end{equation}
\end{theorem}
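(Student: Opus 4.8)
The plan is to derive everything from Theorem~\ref{t:main}(i) by specializing to $v=\chi_{[0,1]}$ and $\nu=1/2$, where a miracle occurs: the convolution structure in (\ref{eq}) becomes elementary because $v^2(t-\tau)=\chi_{[0,1]}(t-\tau)$ is the indicator that $\tau\in[t-1,t]$. First I would observe that the ansatz of a $t$-independent solution fails here (the relevant $u(t)=\int_{-1/2}^{1/2}\chi_{[0,1]}(t-\tau)\,d\tau$ is genuinely non-constant on $(-1/2,1/2)$), so one must actually solve the functional equation. Introduce $g(t,z):=1+\int_{-1/2}^{1/2}v^2(\theta-\tau)f(\theta,z)\,d\theta$; for $v=\chi_{[0,1]}$ and $|\tau|\le 1/2$ this is $g(\tau,z)=1+\int_{\tau}^{1/2}f(\theta,z)\,d\theta$ (the overlap of $[\tau,\tau+1]$ with $[-1/2,1/2]$ is $[\tau,1/2]$). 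Then (\ref{eq}) reads $f(t,z)=-\bigl(z-\int_{t-1}^{t}g(\tau,z)^{-1}\,d\tau\bigr)^{-1}$ where again the $\tau$-integral is over $\tau\in[-1/2,1/2]$, i.e.\ over $[\max(t-1,-1/2),\min(t,1/2)]=[t-1/2\cdot\text{something}]$; being careful, for $t\in[-1/2,1/2]$ it is $\int_{-1/2}^{t}g(\tau,z)^{-1}\,d\tau$.

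\smallskip

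Next I would differentiate in $t$. From $g(\tau,z)=1+\int_\tau^{1/2}f(\theta,z)\,d\theta$ we get $\partial_\tau g=-f(\tau,z)$. Writing $h(t,z):=\int_{-1/2}^{t}g(\tau,z)^{-1}\,d\tau$ so that $f(t,z)=-(z-h(t,z))^{-1}$ and $\partial_t h=g(t,z)^{-1}$, we can try to close the system. The key algebraic step: show that $f$ is proportional to $\partial_t g$ and that $g$ satisfies a first-order ODE in $t$ whose solution is an exponential, which is the source of the logarithm/exponential in (\ref{eqtri}) and (\ref{supp}). Concretely I expect $\partial_t\log g(t,z)=-f(t,z)=(z-h(t,z))^{-1}$ and simultaneously $\partial_t h = g^{-1}$; combining, $g$ solves $\partial_t g = g\cdot(z-h)^{-1}$ and $\partial_t h=g^{-1}$, from which $\partial_t(g(z-h))=g\partial_t(z-h)+(z-h)\partial_t g= -1 + (z-h)g(z-h)^{-1}= -1+ \cdot$... the bookkeeping will show $g(t,z)(z-h(t,z))$ has a simple (affine or exponential) dependence on $t$. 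Evaluating $f(z)=\int_{-1/2}^{1/2}f(t,z)\,dt=\int_{-1/2}^{1/2}\partial_t\log g(t,z)\,dt=\log g(1/2,z)-\log g(-1/2,z)$, together with $g(1/2,z)=1$ (the integral $\int_{1/2}^{1/2}$ vanishes) and $g(-1/2,z)=1+\int_{-1/2}^{1/2}f(\theta,z)\,d\theta=1+f(z)$, yields $f(z)=-\log(1+f(z))$... one needs to track a factor of $z$, and matching against the small-$z$ expansion or against $\mu_1=1/2$ fixes it to give precisely $(1+f(z))\log(1+f(z))=-z^{-1}$. Uniqueness in the stated Nevanlinna-type class (\ref{Nev}) follows because the map $w\mapsto w\log w$ is injective on the relevant domain and $f$ already inherits (\ref{Imzf})$\Rightarrow$(\ref{Nev}) from Theorem~\ref{t:main}, plus analyticity on $\mathbb{C}\setminus[0,\infty)$.

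\smallskip

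For part (ii), the support: a point $\lambda_0>0$ is outside $\operatorname{supp}N$ iff $f$ extends analytically and real across $\lambda_0$; since $z=-1/\bigl((1+f)\log(1+f)\bigr)$, I would analyze the real inverse function $\Phi(f):=-1/((1+f)\log(1+f))$ for real $f\in(-1,\infty)$, find its critical points by solving $\Phi'(f)=0$, i.e.\ $\frac{d}{df}\bigl((1+f)\log(1+f)\bigr)=0$, which gives $\log(1+f)+1=0$, i.e.\ $1+f=e^{-1}$, whence the branch point is at $z=\Phi(e^{-1}-1)=-1/(e^{-1}\cdot(-1))=e$; the other endpoint $z=0$ comes from $f\to-\infty$. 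So $\operatorname{supp}N\subseteq[0,e]$, and absolute continuity plus the fact that $\Im f>0$ strictly on $(0,e)$ gives equality. For part (iii), near $\lambda=e$ a square-root branch point gives the $(e-\lambda)^{1/2}$ law by the standard inversion of a quadratic-type singularity (expand $\Phi$ to second order at $f=e^{-1}-1$); near $\lambda=0$, from $(1+f)\log(1+f)\sim -1/\lambda\to-\infty$ one solves asymptotically $1+f\sim \frac{-1/\lambda}{\log(-1/\lambda)}$ and extracts $\Im f$ on the cut to get $\rho(\lambda)=\tfrac{1}{\pi}\Im f\sim \frac{1}{\lambda(\log\lambda)^2}$. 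For part (iv), the moments: from $f(z)=-\sum_{k\ge0}\mu_k z^{-k-1}$ (with $\mu_0=1$) substituted into (\ref{eqtri}), set $F:=1+f=1-\sum_{k\ge1}\mu_k z^{-k-1}\cdots$ — better, use the Lagrange inversion formula. Writing $u=-1/z$ so $F\log F=u$ with $F=1+f$, and $f=F-1$; we want $\mu_k=-[z^{-k-1}]f(z)=[u^{k}]\cdot(\text{something})$. Put $F=e^{s}$; then $se^{s}=u$, so $s=W(u)$ is the Lambert $W$; $F=e^{W(u)}=u/W(u)$. Then $f=F-1=u/W(u)-1$, and using the known series $W(u)=\sum_{k\ge1}\frac{(-k)^{k-1}}{k!}u^k$ one computes $u/W(u)=\sum_{k\ge0}\frac{k^{k}}{k!}\cdot\frac{u^k}{k+1}\cdot(\pm)$... carrying out Lagrange inversion on $F\log F = u$ directly: $\mu_k=\frac{1}{k+1}[s^{k}]\,\bigl(\frac{s}{\text{(inverse series)}}\bigr)$ — the cleanest route is $[u^{k}]F(u)=\frac{1}{k}[s^{k-1}]\bigl(e^{s}\cdot (e^{s})^{k-1}\cdot\frac{d}{ds}e^{s}/\cdots\bigr)$; in any case the binomial/Abel identity $[u^k]\,(u/W(u))=\frac{k^{k}}{(k+1)!}$ delivers $\mu_k=k^k/(k+1)!$, with $\mu_0=1$ consistent.

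\smallskip

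The main obstacle I anticipate is the careful derivation of the closed equation (\ref{eqtri}) from (\ref{eq}): one must handle the $t$-dependent limits of integration (the overlaps of $[t-1,t]$ and $[t,t+1]$ with $[-1/2,1/2]$) without error, verify that the boundary terms at $t=\pm1/2$ land exactly on $1$ and $1+f(z)$, and correctly track the single power of $z$ so that the normalization is right — it is easy to produce $f=-\log(1+f)$ (missing the $1/z$) instead of the correct $(1+f)\log(1+f)=-1/z$. I would sanity-check the final equation against the first two moments: $f\sim -1/z - \mu_1/z^2$ forces, upon substitution, $\mu_1=1/2$ (matching $1^1/2!$) and $\mu_2=4/3=2^2/3!$, which pins down the constant unambiguously. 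The remaining parts (ii)--(iv) are then routine complex-analytic and combinatorial consequences.
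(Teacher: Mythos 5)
Your overall plan — specialize (\ref{eq}) to $v=\chi_{[0,1]}$, $\nu=1/2$, reduce the triangular convolution to an ODE in $t$, then invert the algebraic relation — is the same route the paper takes. Your computation of the limits of integration is correct: for $|t|,|\tau|\le1/2$ the overlaps are $[-1/2,t]$ and $[\tau,1/2]$, and your $g(\tau,z)=1+\int_\tau^{1/2}f(\theta,z)\,d\theta$ is the paper's $1+\varphi(\tau,z)$. Parts (ii)--(iii) likewise parallel the paper (analyze the real inverse $x(f)=-1/((1+f)\ln(1+f))$, find the critical point at $1+f=e^{-1}$ giving $z=e$, parametrize on the cut near $\lambda=0$). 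Part (iv) is a genuinely different route: you go through the Lambert~$W$ and Lagrange inversion, whereas the paper writes $k^k/(k+1)!$ as a residue integral, builds the generating function $h$, and shows $h$ satisfies the same algebraic relation as $f$. Both work; yours is the more standard combinatorial argument, the paper's stays closer to the Stieltjes-transform machinery already set up.

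There is, however, a concrete gap in your central step of (i). You assert $\partial_t\log g(t,z)=-f(t,z)$ and then compute $f(z)=\int_{-1/2}^{1/2}f(t,z)\,dt=\int_{-1/2}^{1/2}\partial_t\log g\,dt=\log g(1/2)-\log g(-1/2)=-\log(1+f(z))$. But the correct identity is $\partial_t g=-f$, not $\partial_t\log g=-f$, and integrating $\partial_t g$ over $[-1/2,1/2]$ gives only the tautology $-\big(g(1/2)-g(-1/2)\big)=f(z)$, no new information. The equation $f=-\log(1+f)$ you obtained from the false identity is structurally different from (\ref{eqtri}) and cannot be repaired by "tracking a factor of $z$." The missing idea is that differentiating $f(t,z)=-(z-h(t,z))^{-1}$ once more gives a second-order ODE, equivalently that $\partial_t\log g=-f/g$ is \emph{constant} in $t$: writing $\varphi=g-1$ one gets $\varphi''=\varphi'^2/(1+\varphi)$, whence $\partial_t\log(1+\varphi)=\partial_t\log\varphi'$, i.e. $-f/g=C_1(z)$. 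Evaluating this constant two ways — $\int_{-1/2}^{1/2}\partial_t\log g\,dt=C_1=-\log(1+f(z))$, and $C_1=-f(-1/2,z)/g(-1/2,z)=(1/z)/(1+f(z))$ using $f(-1/2,z)=-1/z$ from (\ref{fttz}) — yields $(1+f)\log(1+f)=-1/z$. So the two quantities you tried to equate, $f(z)$ and $-\log(1+f(z))$, are in fact both sides of the \emph{wrong} identity; what equals $-\log(1+f(z))$ is $1/(z(1+f(z)))$. A smaller slip in (iv): the identity you need is $(-1)^k[u^{k+1}]\big(u/W(u)\big)=\tfrac{k^k}{(k+1)!}$, not $[u^k]$; with that index corrected, Lagrange inversion with $H(s)=e^s$ and $\phi(s)=se^s$ gives $[u^{k+1}]e^{W(u)}=\tfrac{1}{k+1}\tfrac{(-k)^k}{k!}$ and the moments follow.
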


\textbf{Remarks}. (1). \ The lower edge $\lambda =0$ of the support is a
hard edge in the random matrix terminology. The typical \ (or standard) soft
edge asymptotic of the density of the limiting Normalized Counting Measure
near the hard edge is $\rho (\lambda )=const\cdot \lambda ^{-1/2},\;\lambda
\downarrow 0^{+\text{ }}$ \cite{Pa-Sh:11}. The asymptotics (\ref{N0}) for
the lower triangular matrices seems the most singular among the known so
far. It follows from the results of \cite{Dy-Ha:04} that for the matrices $%
M_{n}^{(q)}=$ $(A_{n})^{q}((A_{n})^{q})^{T}$ the soft edge asymptotic of the
corresponding density is $\rho (\lambda )=const\cdot (\lambda (\ln
1/\lambda )^{q+1})^{-1}$.

\smallskip (2).\ It is of interest that if we replace the lower triangular
matrix $A_{n}$ by $A_{n}+yI_{n}$ where $y>0$ and $I_{n}$ is $n\times n$ unit
matrix, than it can be shown that the support of the corresponding limiting
distribution is $[a_{\_}(y),a_{+}(y)]$, for any $y>0,
\;a_{\_}(y)>0,\;a_{+}(y)<\infty$,\ but $a_{-}(y)\rightarrow
0^{+},\;a_{+}(y)\rightarrow e $ as $y\rightarrow 0$ and the both edges of
support are soft.

\smallskip (3). \ Formula (\ref{moms}) was found in \cite{Dy-Ha:04} by
combining the operator and the free probability methods. Our proof is based
on the random matrix theory.

\medskip

To conclude the section we note that the results of Theorem \ref{t:main} for
$\nu <\infty $, in particular those for the triangular random matrices
generalize in part various results of works \cite%
{CGLZ:13,Dy-Ha:04,Gi:01,HLN:07} obtained for matrices with independent
entries by various methods. In Section \ref{s:indep} we outline the proof of
Theorem \ref{t:indep} treating the case of independent entries under
condition (\ref{a2e}), applicable for both finite and infinite $\nu $ and
based on the scheme developed in \cite{Pa-Sh:11} to find the limiting
eigenvalue distribution of a wide variety of random matrices.

\section{Proof of Theorem \protect\ref{t:main}}

\label{s:proof} Recall that if $m$ is a non-negative measure of unit mass
and
\begin{equation}
s(z)=\int_{-\infty }^{\infty }\frac{m(d\lambda )}{\lambda -z},\;\Im z\neq 0,
\label{ST}
\end{equation}%
is its Stieltjes transform, then this correspondence is one-to-one, provided
that
\begin{equation}
\Im s(z)\Im z>0,\;\lim_{n\rightarrow \infty }\eta |s(i\eta )|=1.  \label{nev}
\end{equation}%
Moreover, the correspondence is continuous if we use the uniform convergence
of analytic functions on a compact set of $\mathbb{C}\setminus \mathbb{R}$
for Stieltjes transforms and the weak convergence of probability measures
(see e.g. \cite{Ak-Gl:93,Pa-Sh:11}).

Let
\begin{equation}
g_{n}(z)=\int_{-\infty }^{\infty }\frac{N_{n}(d\lambda )}{\lambda -z}.
\label{gn}
\end{equation}%
be the Stieltjes transform of $N_{n}$ of (\ref{NCM}).

By using the representation (\ref{Mny}) and repeating almost literally the
proof of Theorem 19.1.6 of \cite{Pa-Sh:11}, we obtain the bounds%
\begin{equation}
\mathbf{P}\{|N_{n}(\Delta )-\mathbf{E}\{N_{n}(\Delta )\}|>\varepsilon \}\leq
C(\varepsilon )/n,  \label{EN}
\end{equation}%
for any $\Delta \subset \mathbb{R}$, where $C(\varepsilon )$ is independent
of $n$ and is finite if $\varepsilon >0$. The bound, the above one-to-one
correspondence between the measures and their Stieltjes transforms and the
analyticity of $g_{n}$ of (\ref{gn}) in $\mathbb{C\setminus \mathbb{R}}$
reduce the proof of the theorem to that of the limiting relation

\begin{equation}
\lim_{n\rightarrow \infty }\mathbf{E}\{g_{n}(z)\}=f(z),\;  \label{gnjf}
\end{equation}%
uniformly on the set%
\begin{equation}
C_{K_{0}}=\{z:\Re z=0,\;|\Im z|\geq K_{0}>0\}  \label{CK0}
\end{equation}%
where $K_{0}$ is large enough (see (\ref{K1})\ and (\ref{K2})).

It follows from (\ref{NCM}) and the spectral theorem for real symmetric
matrices that%
\begin{equation}
g_{n}(z)=\frac{1}{n}\Tr G(z)=\frac{1}{n}\sum_{|j|\leq m}G_{jj}(z),
\label{Tr}
\end{equation}%
where%
\begin{equation*}
G(z)=(M_{n}-z)^{-1}
\end{equation*}%
is the resolvent of $M_{n}$. We have
\begin{equation}
|G_{jj}|\leq |\Im z|^{-1},\sum_{k}|G_{jk}|^{2}\leq |\Im z|^{-2}.  \label{G<}
\end{equation}%
Here and in what follows we use the notation%
\begin{equation*}
\sum_{j}=\sum_{|j|\leq m}.
\end{equation*}%
We have by the resolvent identity
\begin{equation}  \label{Gjj}
G_{jj}=-\frac{1}{z}+\frac{1}{z}\sum_{k}(\mathbf{y}_{k}{\otimes }\mathbf{y}%
_{k}G)_{jj}.
\end{equation}%
Let us introduce the matrix
\begin{equation}
M_{n}^{k}=\sum_{l\neq k}\mathbf{y}_{l}{\otimes }\mathbf{y}_{l},  \label{Mk}
\end{equation}%
and its resolvent

\begin{equation}
G^{k}(z)=(M_{n}^{k}-zI_{n})^{-1},\quad \Im z\neq 0.  \label{Gk}
\end{equation}%
It follows from the rank-one perturbation formula
\begin{equation}
G-G^{k}=-\frac{G^{k}\mathbf{y}_{k}{\otimes }\mathbf{y}_{k}G^{k}}{1+(G^{k}%
\mathbf{y}_{k},\mathbf{y}_{k})}  \label{G-G}
\end{equation}%
that
\begin{equation}
(\mathbf{y}_{k}{\otimes }\mathbf{y}_{k}G)_{jj}=\frac{(G^{k}\mathbf{y}%
_{k})_{j}\mathbf{y}_{kj}}{1+(G^{k}\mathbf{y}_{k},\mathbf{y}_{k})}\equiv
\frac{B_{kn}}{A_{kn}},  \label{AkBk}
\end{equation}%
and we obtain from (\ref{Gjj})
\begin{equation}
\mathbf{E}\{G_{jj}\}=-\frac{1}{z}+\frac{1}{z}\sum_{k}\mathbf{E}\Big\{\frac{%
B_{kn}}{A_{kn}}\Big\}.  \label{EGjj}
\end{equation}%
The moment conditions (\ref{iso}) and the fact that $G^{k}$ does not depend
on $\mathbf{y}_{k}$ allow us to write
\begin{equation}
\mathbf{E}_{k}\{A_{kn}\}=1+b_{n}^{-1}\sum_{p}v_{pk}^{2}G_{pp}^{k},\quad
\mathbf{E}_{k}\{B_{kn}\}=b_{n}^{-1}v_{jk}^{2}G_{jj}^{k},  \label{EABk}
\end{equation}%
where
\begin{equation*}
v_{jk}=v((j-k)/b_{n})
\end{equation*}%
and $\mathbf{E}_{k}$ denotes the expectation with respect to $\mathbf{y}_{k}$%
. It follows from (\ref{EABk}) and the identity
\begin{equation}
\frac{1}{A}=\frac{1}{\mathbf{E}\{A\}}-\frac{1}{\mathbf{E}\{A\}}\frac{%
A^{\circ }}{A},\quad A^{\circ }=A-\mathbf{E}\{A\},  \label{A}
\end{equation}%
that
\begin{equation}
\mathbf{E}\{G_{jj}\}=-\frac{1}{z}+\frac{1}{zb_{n}}\sum_{k}\frac{v_{jk}^{2}%
\mathbf{E}\{G_{jj}^{k}\}}{1+b_{n}^{-1}\sum_{p}v_{pk}^{2}\mathbf{E}%
\{G_{pp}^{k}\}}+r_{n},  \label{EGjj1}
\end{equation}%
where
\begin{equation}
r_{n}=-\frac{1}{z}\sum_{k}\frac{1}{\mathbf{E}\{A_{kn}\}}\mathbf{E}\Big\{%
\frac{(A_{kn})^{\circ }B_{kn}}{A_{kn}}\Big\}.  \label{rn}
\end{equation}%
Let us show that
\begin{equation}
r_{n}=o(1),\quad n\rightarrow \infty .  \label{rn0}
\end{equation}%
By the spectral theorem for the real symmetric matrices there exists a
non-negative measure $m^{k}$ such that
\begin{equation*}
(G^{k}\mathbf{y}_{k},\mathbf{y}_{k})=\int_{0}^{\infty }\frac{m^{k}(d\lambda )%
}{\lambda -z},
\end{equation*}%
and we can write
\begin{equation*}
\Im (z(G^{k}\mathbf{y}_{k},\mathbf{y}_{k}))=\Im \int_{0}^{\infty }\frac{%
\lambda m_{k}(d\lambda )}{\lambda -z}=\Im z\int_{0}^{\infty }\frac{\lambda
m_{k}(d\lambda )}{|\lambda -z|^{2}}.
\end{equation*}%
Thus $\Im z\;\Im (z(G^{k}\mathbf{y}_{k},\mathbf{y}_{k}))\geq 0$ and
\begin{equation}
{|A_{kn}|^{-1}}\leq \Big|\frac{z}{\Im z+\Im (z(G^{k}\mathbf{y}_{k},\mathbf{y}%
_{k}))}\Big|\leq {|z|}{|\Im z|^{-1}}=1,\quad z\in C_{K_{0}},  \label{A>}
\end{equation}%
implying the bounds
\begin{equation}
{|\mathbf{E}_{k}\{A_{kn}\}|}^{-1}\leq 1,\quad {|\mathbf{E}\{A_{kn}\}|}%
^{-1}\leq 1,\quad z\in C_{K_{0}}.  \label{EA>}
\end{equation}%
This and the Schwarz inequality allow us to write for $r_{n}$ of (\ref{rn}):
\begin{equation}
|r_{n}|\leq \frac{1}{|\Im z|}\sum_{k}\mathbf{E}\{|(A_{kn})^{\circ
}|^{2}\}^{1/2}\mathbf{E}\{|B_{kn}|^{2}\}^{1/2}.  \label{rn<}
\end{equation}%
It follows then from (\ref{m22}), (\ref{G<}) and the bounds (see (\ref{K}))
\begin{equation}
\max_{s,k}v_{sk}^{2}\leq K,\quad \max_{k}\frac{1}{b_{n}}\sum_{s}v_{sk}^{2}\leq
w^{2},  \label{vK}
\end{equation}%
valid for sufficient large $n$, that
\begin{equation}
\mathbf{E}\{|B_{k}|^{2}\}=\frac{v_{jk}^{2}}{b_{n}^{2}}\sum_{s,t}\mathbf{E}%
\{G_{js}^k\overline{G_{jt}^k} v_{sk}v_{tk}\mathbf{E}_{k}%
\{a_{sk}^{(n)}a_{tk}^{(n)}a_{jk}^{(n)2}\}\}\leq \frac{Cv_{jk}^{2}}{|\Im
z|^{2}b_{n}^2},  \label{Bk<}
\end{equation}%
where $C$ is an absolute constant. Now (\ref{rn0}) follows from (\ref{rn<})
-- (\ref{Bk<}) and (\ref{Gyy}).

Let us show that we can replace $G^{k}$ by $G$ in (\ref{EGjj1}) with
the error of the order $O(b_{n}^{-1})$. Indeed, we have from (\ref{G-G})
\begin{equation}
G_{ps}-G_{ps}^{k}=-\frac{(G^{k}\mathbf{y}_{k})_{p}(G^{k}\mathbf{y}_{k})_{s}}{%
A_{kn}}.\quad   \label{Gps}
\end{equation}%
Applying (\ref{Gps}), (\ref{EA>}) and then (\ref{iso}), (\ref{G<}) and (\ref%
{vK}), we get
\begin{align}
\mathbf{E}_{k}\{|G_{pp}-G_{pp}^{k}|\}& \leq \mathbf{E}_{k}\{|(G^{k}\mathbf{y}%
_{k})_{p}|^{2}\}=\frac{1}{b_{n}}\sum_{s,t}G_{ps}^{k}\overline{G_{pt}^{k}}%
v_{sk}v_{tk}\mathbf{E}_{k}\{a_{sk}^{(n)}a_{tk}^{(n)}\}  \notag \\
& =\frac{1}{b_{n}}\sum_{s}|G_{ps}^{k}|^{2}v_{sk}^{2}\leq \frac{K}{b_{n}|\Im
z|^{2}}.  \label{GGk}
\end{align}%
Now it follows from (\ref{EGjj1}), in which $G_{jj}^{k}$ is replaced by $%
G_{jj}$, (\ref{rn0}) and (\ref{GGk}) that
\begin{equation*}
\mathbf{E}\{G_{jj}\}=-\frac{1}{z}+\frac{1}{zb_{n}}\sum_{k}\frac{v_{jk}^{2}%
\mathbf{E}\{G_{jj}\}}{1+b_{n}^{-1}\sum_{p}v_{pk}^{2}\mathbf{E}\{G_{pp}\}}%
+r_{nj},\quad n\rightarrow \infty ,
\end{equation*}%
where we denote by $r_{nj}$ any reminder satisfying
\begin{equation*}
\sup_{z\in C_{K_{0}}}\max_{j}|r_{nj}|\rightarrow 0,\quad n\rightarrow \infty
.
\end{equation*}%
Hence, we have
\begin{equation}
\mathbf{E}\{G_{jj}\}=\Big(\frac{1}{b_{n}}\sum_{k}\frac{v_{jk}^{2}}{%
1+b_{n}^{-1}\sum_{p}v_{pk}^{2}\mathbf{E}\{G_{pp}\}}-z\Big)^{-1}(1+r_{nj}).
\label{EGjj=}
\end{equation}%
Using (\ref{A>}) ant the fact that $\Im G_{pp}(z)\Im z\geq 0$, $\Im z\neq 0,$
it is easy to show that the denominators in (\ref{EGjj=}) do not vanish.

Fix $z$ and $n$ and introduce the piece-wise constant function
\begin{equation}
f_{n}(t,z)=\left\{
\begin{array}{cc}
0, & t\notin \lbrack (-m-1)/b_{n};m/b_{n}]\  \\
\mathbf{E}\{G_{jj}(z)\}, & \quad \quad t\in ((j-1)/b_{n};j/b_{n}],\quad
|j|\leq m.%
\end{array}%
\right.  \label{fntz}
\end{equation}%
%
%
%
%
We have from (\ref{Tr})
\begin{align}
\mathbf{E}\{g_{n}(z)\}& =\frac{b_{n}}{n}\frac{1}{b_{n}}%
\sum_{j}f_{n}(j/b_{n},z)=\frac{b_{n}}{n}%
\int_{(-m-1)/b_{n}}^{m/b_{n}}f_{n}(t,z)dt  \notag \\
& =\frac{1}{2\nu _{n}}\int_{-\nu _{n}}^{\nu _{n}}{}f_{n}(t,z)dt+o(1),\quad
n\rightarrow \infty ,  \label{gnhn}
\end{align}%
where $\nu _{n}={b_{n}}/{m}\rightarrow \nu ,\quad n\rightarrow \infty$
(see (\ref{nu})). Besides, (\ref{EGjj=}) implies
\begin{equation*}
f_{n}(j/b_{n},z)=\Big(\frac{1}{b_{n}}\sum_{k}\frac{v^{2}((j-k)/b_{n})}{%
1+b_{n}^{-1}\sum_{p}v^{2}((p-k)/b_{n})f_{n}(p/b_{n},z)}-z\Big)^{-1}+r_{nj}.
\end{equation*}%
and taking into account (\ref{w2}), we get for any $|t|\leq \nu _{n}$
\begin{equation}
f_{n}(t,z)=\Bigg(\int_{|\tau |\leq \nu _{n}}\frac{v^{2}(t-\tau )d\tau }{%
1+\int_{|\theta |\leq \nu _{n}}v^{2}(\theta -\tau )f_{n}(\theta ,z)d\theta }%
-z\Bigg)^{-1}+r_{n}(t,z),  \label{eqhn}
\end{equation}%
where
\begin{equation}
\lim_{n\rightarrow \infty }\sup_{z\in \leq C_{K_{0}}}\sup_{|t|\leq \nu
_{n}}|r_{n}(t,z)|=0.  \label{rnt}
\end{equation}%
%
Note now that (\ref{eq}) can be written as $f=Tf$ where $T$ is a contracting
map for any $z\in C_{K_{0}}$. Indeed, we have for any pair $f_{1}$, $f_{2}$
satisfying (\ref{Imzf}) and any $z\in C_{K_{0}}$
\begin{align}
& \Big|\int_{|\tau |\leq \nu }{v^{2}(t-\tau )\Big(1+\int_{|\theta |\leq \nu
}v^{2}(\theta -\tau )f_{1,2}(\theta ,z)d\theta \Big)^{-1}d\tau }-z\Big|%
^{-1}\leq K_{0}^{-1},  \label{den1} \\
& \Big|1+\int_{|\theta |\leq \nu }v^{2}(\theta -\tau )f_{1,2}(\theta
,z)d\theta \Big|^{-1}\leq (1-w^{2}/|\Im z|)^{-1}\leq (1-w^{2}/K_{0})^{-1},
\label{den2}
\end{align}%
so that
\begin{equation}
\sup_{|t|\leq \nu }|[Tf_{1}](t,z)-[Tf_{2}](t,z)|\leq q\sup_{|t|\leq \nu
}|f_{1}(t,z)-f_{2}(t,z)|,  \label{Tq}
\end{equation}%
where
\begin{equation}
q\leq \frac{w^{4}}{(K_{0}-w^{2})^{2}}<1\quad \text{if}\quad |\Im z|\geq
K_{0}>2w^{2}.  \label{K1}
\end{equation}%
Hence, for all $z\in C_{K_{0}}$ there exists a unique solution of (\ref{eq})
satisfying (\ref{Imzf}).

Consider first the case $\nu <\infty $. Then it follows from (\ref{K}) that
for any uniformly bounded in $t$, $z$, $n$ functions $\{F_{n}\}$ we have
\begin{equation}
\lim_{n\rightarrow \infty }\sup_{z\in C_{K_{0}}}\sup_{|t|\leq \nu _{n}}\Big|%
\Big\{\int_{|\tau |\leq \nu _{n}}-\int_{|\tau |\leq \nu }\Big\}v^{2}(t-\tau
)F_{n}(\tau ,z)d\tau \Big|=0.  \label{int-int}
\end{equation}%
This, (\ref{eq}), (\ref{eqhn}) -- (\ref{rnt}), and (\ref{Tq}) lead to
\begin{equation}
\lim_{n\rightarrow \infty }\sup_{z\in C_{K_{0}}}\sup_{|t|\leq \nu
_{n}}|f_{n}(t,z)-f(t,z)|=0.  \label{hnh}
\end{equation}%
hence,
\begin{equation*}
f(z)=\lim_{n\rightarrow \infty }\mathbf{E}\{g_{n}(z)\}=\lim_{n\rightarrow
\infty }\frac{1}{2\nu _{n}}\int_{|t|\leq \nu _{n}n}f(t,z)dt=\frac{1}{2\nu }%
\int_{|t|\leq \nu }f(t,z)dt.
\end{equation*}%
Consider now the case $\nu =\infty $. In this case the unique solution of (%
\ref{eqc}) is $t$-independent function $f$, satisfying (\ref{qc}). In
addition, we have by (\ref{den1})
\begin{equation}
\frac{1}{2\nu _{n}}\int_{|t|\leq \nu _{n}}|f_{n}(t,z)-f(z)|dt\leq K_{0}^{-2}%
\frac{1}{2\nu _{n}}\int_{|t|\leq \nu _{n}}\big(%
T_{1}^{(n)}+T_{2}^{(n)}+T_{3}^{(n)}\big)dt,  \label{T123}
\end{equation}%
where
\begin{align*}
& T_{1}^{(n)}=\int_{|\tau |\leq \nu _{n}}v^{2}(t-\tau )\Bigg|\Big(%
1+\int_{|\theta |\leq \nu _{n}}v^{2}(\theta -\tau )f_{n}(\theta ,z)d\theta %
\Big)^{-1} \\
& \hspace{5cm}-\Big(1+\int_{|\theta |\leq \nu _{n}}v^{2}(\theta -\tau
)f(z)d\theta \Big)^{-1}\Bigg|d\tau , \\
& T_{2}^{(n)}=\int_{|\tau |\leq \nu _{n}}v^{2}(t-\tau )\Bigg|\Big(%
1+\int_{|\theta |\leq \nu _{n}}v^{2}(\theta -\tau )f(z)d\theta \Big)^{-1} \\
& \hspace{5cm}-\Big(1+\int_{|\theta |\leq \infty }v^{2}(\theta -\tau
)f(z)d\theta \Big)^{-1}\Bigg|d\tau , \\
& T_{3}^{(n)}=|1+w^{2}h(z)|^{-1}\int_{|\tau |\geq \nu _{n}}v^{2}(t-\tau
)d\tau .
\end{align*}%
It follows from (\ref{eqc}) that
\begin{equation*}
|1+w^{2}f(z)|^{-1}=|zf(z)|\leq 1,\quad \forall z\in C_{K_{0}},
\end{equation*}%
hence,
\begin{eqnarray*}
\frac{1}{2\nu _{n}}\int_{|t|\leq \nu _{n}}T_{3}^{(n)}dt &\leq &\frac{1}{2\nu
_{n}}\int_{|t|\leq \nu _{n}}dt\int_{|\tau |\geq \nu _{n}}v^{2}(t-\tau )d\tau
\\
&=&\int_{|y|\geq \nu _{n}}v^{2}(y)dy-\frac{1}{2\nu _{n}}\int_{|y|\leq \nu _{n}}|y|v^{2}(y)dy,
\end{eqnarray*}%
and then (\ref{nu}) and (\ref{w2}) imply
\begin{equation}
\lim_{n\rightarrow \infty }\frac{1}{2\nu _{n}}\int_{|t|\leq \nu
_{n}}T_{3}^{(n)}dt=0.  \label{T3}
\end{equation}%
Furthermore, it follows from (\ref{den2}) and (\ref{K1}) that
\begin{align*}
\frac{1}{2\nu _{n}}\int_{|t|\leq \nu _{n}}T_{2}^{(n)}dt& \leq \frac{2}{\nu
_{n}}\int_{|t|\leq \nu _{n}}dt\int_{|\tau |\leq \nu _{n}}v^{2}(t-\tau
)\int_{|\theta |\geq \nu _{n}}v^{2}(\theta -\tau )d\theta \\
& =\frac{2}{\nu _{n}}\int_{|t|\leq \nu _{n}}d\tau \int_{-\nu _{n}-\tau
}^{\nu _{n}-\tau }v^{2}(y)dy\int_{|\theta |\geq \nu _{n}}v^{2}(\theta -\tau
)d\theta \\
& \leq \frac{2w^{2}}{\nu _{n}}\int_{|\tau |\leq \nu _{n}}d\tau
\int_{|\theta |\geq \nu _{n}}v^{2}(\theta -\tau )d\theta .
\end{align*}%
This and (\ref{T3}) yield%
\begin{equation}
\lim_{n\rightarrow \infty }\frac{1}{2\nu _{n}}\int_{|t|\leq \nu
_{n}}T_{2}^{(n)}dt=0.  \label{T2}
\end{equation}%
We also have
\begin{align}
\frac{1}{2\nu _{n}}& \int_{|t|\leq \nu _{n}}T_{1}^{(n)}dt\leq \frac{2}{\nu
_{n}}\int_{|t|\leq \nu _{n}}dt\int_{|\tau |\leq \nu _{n}}v^{2}(t-\tau )d\tau
\notag \\
& \hspace{5cm}\times \int_{|\theta |\leq \nu _{n}}v^{2}(\theta -\tau
)|f_{n}(\theta ,z)-h(z)|d\theta  \notag \\
& =\frac{2}{\nu _{n}}\int_{|\theta |\leq \nu _{n}}|f_{n}(\theta
,z)-f(z)|d\theta \int_{|\tau |\leq \nu _{n}}v^{2}(\theta -\tau )d\tau
\int_{|t|\leq \nu _{n}}v^{2}(t-\tau )d\tau  \notag \\
& \leq \frac{2w^{4}}{\nu _{n}}\int_{|\theta |\leq \nu _{n}}|f_{n}(\theta
,z)-f(z)|d\theta .  \label{T1}
\end{align}%
It follows from (\ref{T123}) and (\ref{T3}) -- (\ref{T1}) that
\begin{equation*}
(1-4w^{4}K_{0}^{-2})\frac{1}{2\nu _{n}}\int_{|t|\leq \nu
_{n}}|f_{n}(t,z)-f(z)|dt=o(1),\quad n\rightarrow \infty .
\end{equation*}%
We conclude that if $\nu =\infty $, then $f$ of (\ref{gnjf}) coincides with
the solution of (\ref{eqc}) satisfying (\ref{Imzf}). Thus to finish the
proof of Theorem \ref{t:main} it remains to prove

\begin{lemma}
\label{l:aux}Denote $\mathbf{E}_{k}$ the expectation with respect to $%
\mathbf{y}_{k}$ and let for any random variable $\xi $
\begin{equation*}
\xi _{k}^{\circ }=\xi -\mathbf{E}_{k}\{\xi \}
\end{equation*}%
be its centered version, then we have under the conditions of Theorem \ref%
{t:main}:

(i)
\begin{equation}
\;\mathbf{E}\{|(G^{k}\mathbf{y}_{k},\mathbf{y}_{k})_{k}^{\circ }|^{2}\}\leq
C(z)\delta _{n},  \label{Gk0k}
\end{equation}

(ii) there exists $K_{0}>0$ such that $\forall z\in C_{K_{0}}=\{z:\;\Re
z=0,\;|\Im z|\geq K_{0}\}$
\begin{align}
& \mathbf{Var}\{G_{pp}\}\leq C(z)\delta _{n},  \label{Vpp} \\
& \mathbf{Var}\{(G^{k}\mathbf{y}_{k},\mathbf{y}_{k})\}\leq C(z)\delta _{n}.
\label{Gyy}
\end{align}%
where
\begin{equation}
\delta _{n}=o(1),\quad n\rightarrow \infty  \label{den}
\end{equation}%
does not depend on $p$, $k$, $z$, and we denote by $C(z)$ any positive
quantity, which depends only on $z$ and is finite for $z\in C_{K_{0}}$ (see (%
\ref{CK0}))
\end{lemma}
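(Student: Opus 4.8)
The plan is to prove all three variance-type bounds \eqref{Gk0k}, \eqref{Vpp}, \eqref{Gyy} by the standard martingale-difference expansion over the independent vectors $\{\mathbf{a}_k\}_{|k|\le m}$, combined with the good-vector moment hypotheses \eqref{iso}, \eqref{m22}. First I would treat (i). Since $G^k$ is independent of $\mathbf{y}_k$, the quantity $(G^k\mathbf{y}_k,\mathbf{y}_k)-\mathbf{E}_k\{\cdot\}$ is a quadratic-form fluctuation in the components of $\mathbf{a}_k$. Writing $(G^k\mathbf{y}_k,\mathbf{y}_k)=b_n^{-1}\sum_{s,t}G^k_{st}v_{sk}v_{tk}a_{sk}a_{tk}$, subtracting the $\mathbf{E}_k$-mean (which by \eqref{iso} kills the off-diagonal-in-expectation part and replaces $a_{sk}a_{sk}$ by $1$), and using the unconditional symmetry plus \eqref{m22} to control $\mathbf{E}_k\{a_{sk}a_{tk}a_{s'k}a_{t'k}\}$, one gets
\begin{equation*}
\mathbf{E}_k\{|(G^k\mathbf{y}_k,\mathbf{y}_k)_k^\circ|^2\}\le \frac{C}{b_n^2}\Big(\sum_{s}|G^k_{ss}|^2 v_{sk}^4\,(m_4-1+o(1))+\sum_{s\ne t}|G^k_{st}|^2 v_{sk}^2 v_{tk}^2\,m_{2,2}+\dots\Big),
\end{equation*}
and by \eqref{G<}, \eqref{vK} each sum is $O(b_n^{-1}|\Im z|^{-2})$-type, so $\delta_n$ may be taken of order $b_n^{-1}$ times the relevant moment remainders; the $o(1)$ in \eqref{m22} is exactly what one needs to absorb the $m_{2,2}-1$ discrepancy. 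This gives \eqref{Gk0k}, and \eqref{Gyy} follows from it together with the elementary bound $\mathbf{Var}\{\mathbf{E}_k\{\cdot\}\}\le \mathbf{Var}\{\cdot\}$ once \eqref{Vpp}-type control is available for the outer expectation, or more directly by adding the contribution of $\mathbf{E}_k$-fluctuation, which is itself of the same size by the argument for \eqref{Vpp}.

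For \eqref{Vpp} I would use the martingale decomposition $G_{pp}-\mathbf{E}\{G_{pp}\}=\sum_{k}(\mathbf{E}_{\le k}-\mathbf{E}_{<k})G_{pp}$, where $\mathbf{E}_{\le k}$ averages over $\mathbf{y}_{k+1},\dots$ — equivalently $=\sum_k(\mathbf{E}_{\le k}-\mathbf{E}_{<k})(G_{pp}-G^k_{pp})$ since $G^k_{pp}$ does not depend on $\mathbf{y}_k$. Hence
\begin{equation*}
\mathbf{Var}\{G_{pp}\}=\sum_k \mathbf{E}\big\{|(\mathbf{E}_{\le k}-\mathbf{E}_{<k})(G_{pp}-G^k_{pp})|^2\big\}\le \sum_k \mathbf{E}\{|G_{pp}-G^k_{pp}|^2\},
\end{equation*}
using orthogonality of martingale differences and $\|\mathbf{E}_{\le k}-\mathbf{E}_{<k}\|\le 1$. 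By the rank-one formula \eqref{Gps}, $G_{pp}-G^k_{pp}=-(G^k\mathbf{y}_k)_p^2/A_{kn}$, and $|A_{kn}|^{-1}\le 1$ on $C_{K_0}$ by \eqref{A>}, so $|G_{pp}-G^k_{pp}|^2\le |(G^k\mathbf{y}_k)_p|^4$; taking $\mathbf{E}_k$ and using $\mathbf{E}_k\{|(G^k\mathbf{y}_k)_p|^4\}\le C b_n^{-2}(\sum_s|G^k_{ps}|^2 v_{sk}^2)^2\cdot O(1)\le CK^2 b_n^{-2}|\Im z|^{-4}$ (the $O(1)$ from $m_4$ and $m_{2,2}$ in \eqref{m22}) gives $\mathbf{E}\{|G_{pp}-G^k_{pp}|^2\}\le C(z) b_n^{-2}$, and summing the $\le Cb_n$ nonzero terms in $k$ (only $|p-k|$ in the support of $v$ contribute) yields $\mathbf{Var}\{G_{pp}\}\le C(z) b_n^{-1}$. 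So $\delta_n=b_n^{-1}$ works, uniformly in $p,k$ and in $z\in C_{K_0}$ with $K_0$ chosen large enough (as already required in \eqref{K1}).

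Finally \eqref{Gyy}: write $(G^k\mathbf{y}_k,\mathbf{y}_k)-\mathbf{E}\{(G^k\mathbf{y}_k,\mathbf{y}_k)\}$ as the sum of $(G^k\mathbf{y}_k,\mathbf{y}_k)_k^\circ$ and $\mathbf{E}_k\{(G^k\mathbf{y}_k,\mathbf{y}_k)\}-\mathbf{E}\{\cdot\}$; the first has second moment $\le C(z)\delta_n$ by (i), and the second equals (up to the $b_n^{-1}$ normalisation and \eqref{iso}) $b_n^{-1}\sum_p v_{pk}^2(G^k_{pp}-\mathbf{E}\{G^k_{pp}\})$ plus a negligible $\mathbf{E}_k$-vs-$\mathbf{E}$ diagonal-moment term, whose variance is $\le b_n^{-2}(\sum_p v_{pk}^2)^2 \max_p \mathbf{Var}\{G^k_{pp}\}\le w^4\,C(z)\delta_n$ by \eqref{vK} and the version of \eqref{Vpp} for $G^k$ (which is proved by exactly the same martingale argument, the removal of one vector $\mathbf{y}_k$ changing nothing). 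Adding the two contributions and using $(a+b)^2\le 2a^2+2b^2$ gives \eqref{Gyy}. The main obstacle is purely bookkeeping: making sure the moment error $m_{2,2}^{(n)}-1=o(1)$ from \eqref{m22} is tracked consistently through the quadratic-form computations so that it feeds into $\delta_n$ without spoiling the uniformity in $p,k,z$ — there is no analytic difficulty, only the need to organise the several quadratic-form expansions cleanly and to verify that only $O(b_n)$ indices $k$ contribute to each sum because $v$ has compact support.
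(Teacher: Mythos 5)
Your approach to part (i) is essentially the paper's: both expand $\mathbf{E}_k\{|(G^k\mathbf{y}_k,\mathbf{y}_k)_k^\circ|^2\}$ via the unconditional fourth-moment formula
$\mathbf{E}\{a_{pk}a_{qk}a_{sk}a_{tk}\}=m_{2,2}^{(n)}(\delta_{pq}\delta_{st}+\delta_{ps}\delta_{qt}+\delta_{pt}\delta_{qs})+\kappa_4^{(n)}\delta_{pq}\delta_{ps}\delta_{pt}$,
obtain the three-term decomposition with prefactors $(m_{2,2}^{(n)}-1)$, $m_{2,2}^{(n)}b_n^{-1}$, and $\kappa_4^{(n)}b_n^{-1}$, and conclude with $(\ref{G<})$ and $(\ref{vK})$. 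Note however that $\delta_n$ cannot be taken to be $b_n^{-1}$: the first term is $(m_{2,2}^{(n)}-1)\,|b_n^{-1}\sum_p v_{pk}^2 G_{pp}^k|^2$, whose coefficient $|b_n^{-1}\sum_p v_{pk}^2 G_{pp}^k|^2$ is $O(1)$, not $O(b_n^{-1})$, so $\delta_n$ is forced to contain $m_{2,2}^{(n)}-1=o(1)$, whose rate is not controlled. Your sketch of $(\ref{Gyy})$ from $(\ref{Gk0k})$ and $(\ref{Vpp})$ via the decomposition $(G^k\mathbf{y}_k,\mathbf{y}_k)^\circ=(G^k\mathbf{y}_k,\mathbf{y}_k)_k^\circ+b_n^{-1}\sum_p v_{pk}^2 G_{pp}^{k\circ}$ is also exactly what the paper does.

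The genuine gap is in your proof of $(\ref{Vpp})$. Your Efron--Stein/martingale bound gives
\begin{equation*}
\mathbf{Var}\{G_{pp}\}\le \sum_k \mathbf{E}\{|G_{pp}-G^k_{pp}|^2\},
\end{equation*}
and you then bound each summand by $C(z)b_n^{-2}$ and claim the sum has only $O(b_n)$ nonzero terms ``because only $|p-k|$ in the support of $v$ contribute.'' That last claim is false: $(G^k\mathbf{y}_k)_p=b_n^{-1/2}\sum_s G^k_{ps}v_{sk}a_{sk}$ contains no factor $v_{pk}$, so $G_{pp}-G^k_{pp}=-(G^k\mathbf{y}_k)_p^2/A_{kn}$ is generically nonzero for \emph{every} $k$ with $|k|\le m$, not just for $|p-k|\lesssim b_n$. (You are probably thinking of $(\mathbf{y}_k)_p$, which does carry a factor $v_{pk}$; the resolvent $G^k$ destroys the band structure.) Summing over all $n$ indices $k$, your argument yields $\mathbf{Var}\{G_{pp}\}\le C(z)\,n b_n^{-2}$, which is $o(1)$ only if $n=o(b_n^2)$. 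Since the lemma must hold ``under the conditions of Theorem \ref{t:main},'' which includes $\nu=\infty$ with merely $b_n\to\infty$, $n/b_n\to\infty$ (e.g. $b_n\sim n^{1/3}$), this is insufficient. The paper avoids the problem by not summing naively in $k$: it expands $\mathbf{E}\{G_{jj}\overline{G_{jj}^\circ}\}$ via the resolvent identity, where each term carries the localizing factor $v_{jk}$ coming from $B_{kn}=(G^k\mathbf{y}_k)_j\,b_n^{-1/2}v_{jk}a_{jk}$, and then closes a self-consistent inequality
\begin{equation*}
V_j\le \tfrac{K}{K_0}V_j+\tfrac{K^2}{K_0^2}V_j^{1/2}\max_p V_p^{1/2}+o(1),\qquad V_j=\mathbf{Var}\{G_{jj}\},
\end{equation*}
from which $\max_p V_p=o(1)$ follows once $K_0$ satisfies $(\ref{K2})$, with no constraint on $n/b_n^2$. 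To repair your argument you would need to exploit a cancellation inside $(\mathbf{E}_{\le k}-\mathbf{E}_{<k})(G_{pp}-G^k_{pp})$ beyond the crude bound by $\mathbf{E}\{|G_{pp}-G^k_{pp}|^2\}$, or switch to a fixed-point scheme of the paper's type.
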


\begin{proof}
It follows from (\ref{m22}) and from unconditionality of the distribution of
$\mathbf{a}_{k}$ that
\begin{equation}
\mathbf{E}\{a_{pk}^{(n)}a_{qk}^{(n)}a_{sk}^{(n)}a_{tk}^{(n)}%
\}=m_{2,2}^{(n)}(\delta _{pq}\delta _{st}+\delta _{ps}\delta _{qt}+\delta
_{pt}\delta _{qs})+\kappa _{4}^{(n)}\delta _{pq}\delta _{ps}\delta _{pt},
\label{aaaa}
\end{equation}%
where $\kappa _{4}^{(n)}=m_{4}^{(n)}-3(m_{2,2}^{(n)})^{2}=O(1)$, $%
n\rightarrow \infty $. This and (\ref{EABk}) yield
\begin{align}
& \mathbf{E}_{k}\{|(G^{k}\mathbf{y}_{k},\mathbf{y}_{k})_{k}^{\circ }|^{2}\}=
\label{Gyy=} \\
& =\frac{1}{b_{n}^{2}}\sum_{p,q,s,t}G_{pq}\overline{G_{st}}%
v_{pk}v_{qk}v_{sk}v_{tk}\mathbf{E}_{k}%
\{a_{pk}^{(n)}a_{qk}^{(n)}a_{sk}^{(n)}a_{tk}^{(n)}\}-\Big|\frac{1}{b_{n}}%
\sum_{p}v_{pk}^{2}G_{pp}^{k}\Big|^{2}  \notag \\
& =(m_{2,2}^{(n)}-1)\Big|\frac{1}{b_{n}}\sum_{p}v_{pk}^{2}G_{pp}^{k}\Big|%
^{2}+\frac{2m_{2,2}^{(n)}}{b_{n}^{2}}%
\sum_{p,s}v_{sk}^{2}v_{pk}^{2}|G_{ps}^{k}|^{2}+\frac{\kappa _{4}^{(n)}}{%
b_{n}^{2}}\sum_{p}v_{pk}^{4}|G_{pp}^{k}|^{4}.  \notag
\end{align}%
By (\ref{G<}) and (\ref{vK}) we have
\begin{align*}
& \Big|\frac{1}{b_{n}}\sum_{p}v_{pk}^{2}G_{pp}^{k}\Big|\leq \frac{K}{|\Im z|}%
, \\
& \frac{1}{b_{n}^{2}}\sum_{p,s}v_{sk}^{2}v_{pk}^{2}|G_{ps}^{k}|^{2}\leq
\frac{K}{b_{n}^{2}}\sum_{p}v_{pk}^{2}\sum_{s}|G_{ps}^{k}|^{2}\leq \frac{K^{2}%
}{b_{n}|\Im z|^{2}}.
\end{align*}%
This, (\ref{m22}) and (\ref{Gyy=})
lead to (\ref{Gk0k}).

Let us prove (\ref{Vpp}). We have (cf. (\ref{EGjj}))
\begin{align}
\mathbf{E}\{G_{jj}\overline{G_{jj}^{\circ }}\}=\frac{1}{z}\sum_{k}\mathbf{E}%
\Big\{\frac{B_{kn}}{A_{kn}}\overline{G_{jj}^{\circ }}\Big\}=& \frac{1}{z}%
\sum_{k}\mathbf{E}\Big\{\mathbf{E}_{k}\Big\{\frac{B_{kn}}{A_{kn}}\Big\}%
\overline{G_{jj}^{k\circ }}\Big\}  \label{T12} \\
& +\frac{1}{z}\sum_{k}\mathbf{E}\Big\{\frac{B_{kn}}{A_{kn}}\overline{%
(G_{jj}-G_{jj}^{k})^{\circ }}\Big\}=:\mathcal{T}_{1}+\mathcal{T}_{2}.  \notag
\end{align}%
It follows from the Schwarz inequality, (\ref{EA>}), (\ref{Bk<}) and (\ref%
{GGk}):
\begin{equation}
|\mathcal{T}_{2}|\leq \sum_{k}\mathbf{E}\{|B_{k}|^{2}\}^{1/2}\;\mathbf{E}%
\{|(G_{pp}-G_{pp}^{k})^{\circ }|^{2}\}^{1/2}\leq C(z)b_{n}^{-1/2}.
\label{T2c}
\end{equation}%
Consider now $\mathcal{T}_{1}$ of (\ref{T2c}). We have by (\ref{A})
\begin{equation*}
\mathbf{E}_{k}\Big\{\frac{B_{kn}}{A_{kn}}\Big\}=\frac{\mathbf{E}%
_{k}\{B_{kn}\}}{\mathbf{E}_{k}\{A_{kn}\}}-\frac{1}{\mathbf{E}_{k}\{A_{kn}\}}%
\mathbf{E}_{k}\Big\{\frac{(A_{kn})_{k}^{\circ }B_{kn}}{A_{kn}}\Big\},
\end{equation*}%
and by the Schwarz inequality, (\ref{EA>}), (\ref{Bk<}), and (\ref{Gk0k})
\begin{equation*}
\mathbf{E}_{k}\Big\{{{A_{kn}}^{-1}(A_{kn})_{k}^{\circ }B_{kn}}\Big\}\leq
C(z)b_{n}^{-1}|v_{jk}|\delta _{n}^{1/2}.
\end{equation*}%
This and (\ref{EABk}) yield
\begin{equation}
\mathcal{T}_{1}=\frac{1}{zb_{n}}\sum_{k}v_{jk}^{2}\mathbf{E}\Big\{\frac{1}{\mathbf{E}%
_{k}\{A_{kn}\}}G_{jj}^{k}\overline{G_{jj}^{k\circ }}\Big\}+o(1),\quad
n\rightarrow \infty .  \label{T1=}
\end{equation}%
Applying again (\ref{A}) and then (\ref{EABk}), we get
\begin{align*}
\mathbf{E}\Bigg\{& \frac{G_{jj}^{k}\overline{G_{jj}^{k\circ }}}{\mathbf{E}%
_{k}\{A_{kn}\}}\Bigg\}=\frac{1}{\mathbf{E}\{A_{kn}\}}\mathbf{E}\Big\{%
|G_{jj}^{k\circ }|^{2}\Big\}-\frac{1}{\mathbf{E}\{A_{kn}\}}\mathbf{E}\Big\{%
\frac{(\mathbf{E}_{k}\{A_{kn}\})^{\circ }}{\mathbf{E}_{k}\{A_{kn}\}}%
G_{jj}^{k}\overline{G_{jj}^{k\circ }}\Big\} \\
& \quad \quad =\frac{1}{\mathbf{E}\{A_{kn}\}}\mathbf{E}\Big\{|G_{jj}^{k\circ
}|^{2}\Big\}-\frac{1}{\mathbf{E}\{A_{kn}\}}\frac{1}{b_{n}}\sum_{p}v_{pk}^{2}%
\mathbf{E}\Big\{\frac{G_{pp}^{k\circ }}{\mathbf{E}_{k}\{A_{kn}\}}G_{jj}^{k}%
\overline{G_{jj}^{k\circ }}\Big\}.
\end{align*}%
Note also that in view of (\ref{GGk}) we can replace $G^{k}$ with $G$ with
the error term of the order $O(b_{n}^{-1})$, hence
\begin{align*}
\mathcal{T}_{1}=& \frac{1}{zb_{n}}\sum_{k}v_{jk}^{2}\frac{1}{\mathbf{E}\{A_{kn}\}}%
\cdot \mathbf{E}\Big\{|G_{jj}^{\circ }|^{2}\Big\} \\
& -\frac{1}{zb_{n}^{2}}\sum_{k}v_{jk}^{2}\frac{1}{\mathbf{E}\{A_{kn}\}}%
\sum_{p}v_{pk}^{2}\mathbf{E}\Big\{\frac{G_{jj}^{k}}{\mathbf{E}_{k}\{A_{kn}\}}%
G_{pp}^{\circ }\overline{G_{jj}^{\circ }}\Big\}+o(1),\quad n\rightarrow
\infty ,
\end{align*}%
and by the Schwarz inequality, (\ref{EA>}), and (\ref{vK})
\begin{equation*}
|\mathcal{T}_{1}|\leq \frac{K}{|\Im z|}\mathbf{Var}\{G_{jj}\}+\frac{K^{2}}{|\Im z|^{2}}%
\mathbf{Var}\{G_{jj}\}^{1/2}\max_{|p|\leq m}\mathbf{Var}\{G_{pp}%
\}^{1/2}+o(1),\quad n\rightarrow \infty .
\end{equation*}%
This and (\ref{T12}) -- (\ref{T2c}) yield for $V_{j}:=\mathbf{Var}\{G_{jj}\}$,
$|j|\leq m,\;z\in C_{K_{0}}$:
\begin{equation*}
V_{j}\leq \frac{K}{K_{0}}V_{j}+\frac{K^{2}}{K_{0}^{2}}V_{j}^{1/2}\max_{|p|%
\leq m}V_{p}^{1/2}+o(1),\quad n\rightarrow \infty .
\end{equation*}%
Choosing here $K_{0}$ such that%
\begin{equation}
K/K_{0}+K^{2}/K_{0}^{2}<1,  \label{K2}
\end{equation}
we obtain that%
\begin{equation*}
\max_{|p|\leq m}V_{p}=o(1),\;n\rightarrow \infty ,
\end{equation*}
i.e., (\ref{Vpp}).

It remains to note that we have by (\ref{EABk})
\begin{equation*}
(G^{k}\mathbf{y}_{k},\mathbf{y}_{k})^{\circ }=(G^{k}\mathbf{y}_{k},\mathbf{y}%
_{k})_{k}^{\circ }+b_{n}^{-1}\sum_{p}v_{pk}^{2}G_{pp}^{k\circ }.
\end{equation*}%
This together with (\ref{Gk0k}) -- (\ref{Vpp}) lead to (\ref{Gyy}) and
complete the proof of the lemma.
\end{proof}

\section{Proof of Theorem \protect\ref{t:indep}}

\label{s:indep}

The proof of Theorem \ref{t:indep} can be obtained by following the scheme
worked out in \cite{Ly-Pa:08} (see also \cite{Pa-Sh:11} and references
therein) and applicable to a wide variety of random matrices with
independent entries. Namely, one uses first the martingale-type argument to
prove the bound (\ref{EN}) and then the so-called interpolation trick to
reduce the initial problem to that one of finding the limit (\ref{gnjf}) for
the random matrices with Gaussian entries with the same first and second
moments. Since these two steps are rather standard, we will explain below
just the derivation of the limiting equations (\ref{ftz}) -- (\ref{Imzf})\
for i.i.d. Gaussian entries $\{a_{jk}\}_{|j|,|k|\leq m}$ satisfying (\ref%
{iso}). Note also that by using a standard truncation technique condition (%
\ref{a2e}) can be replaced with the Lindeberg type condition for the second
moments (see \cite{Ly-Pa:08,Pa-Sh:11}).

Accordingly, consider a random matrix $A_{n}$ 
(\ref{Ajk}) -- (\ref{K}), where $\{a_{jk}\}_{|j|,|k|\leq m}$ are jointly
independent standard Gaussian random variables of zero mean and unit
variance. We will use

\begin{proposition}
\label{p:Nash} Let $\xi=\{\xi _{l}\}_{l=1}^{p}$ be independent Gaussian
random variables of zero mean, and $\Phi :\mathbb{R}^{p}\rightarrow \mathbb{C%
}$ be a differentiable function with polynomially bounded partial
derivatives $\Phi _{l}^{\prime },\;l=1,...,p$. Then we have
\begin{equation}
\mathbf{E}{\mathbb{\{\xi }}_{l}{\Phi (\xi )}\}=\mathbf{\ E}{\mathbb{\{}{\xi }%
}_{l}^{2}\}\mathbf{E}\mathbb{\{}{\Phi _{l}^{\prime }(\xi )}\},\;l=1,...,p,
\label{diffga}
\end{equation}%
and%
\begin{equation}
\mathbf{Var}\{\Phi (\xi) \}\leq \sum_{l=1}^{p}\mathbf{E}\{\xi _{l}^{2}\}%
\mathbf{E}\left\{ |\Phi _{l}^{\prime }(\xi)|^{2}\right\} .  \label{Nash}
\end{equation}
\end{proposition}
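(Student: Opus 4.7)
The plan is to prove the two parts separately, each by reducing to a one-dimensional statement through independence and tensorization.

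For Part (i), the Gaussian integration-by-parts identity (\ref{diffga}): I would fix the index $l$ and condition on $\{\xi_j\}_{j\neq l}$. Because the $\xi_j$ are jointly independent, this conditioning leaves the law of $\xi_l$ intact, and the identity reduces to the one-variable statement $\mathbf{E}\{\xi h(\xi)\} = \sigma^2\,\mathbf{E}\{h'(\xi)\}$ for $\xi \sim N(0,\sigma^2)$. This in turn follows from the density identity $x p(x) = -\sigma^2 p'(x)$ (with $p$ the $N(0,\sigma^2)$ density) and a single integration by parts. The polynomial-growth assumption on $\Phi'_l$ (hence on $h'$) guarantees that $h(x)p(x) \to 0$ as $|x|\to\infty$, so the boundary term vanishes, and it also makes Fubini applicable when reassembling the multivariate identity from its conditional version.

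For Part (ii), the Gaussian Poincar\'e inequality (\ref{Nash}): I would combine the tensorization of variance with the one-dimensional Gaussian Poincar\'e inequality. The tensorization bound
\[
\mathbf{Var}\{\Phi(\xi)\} \le \sum_{l=1}^{p} \mathbf{E}\{\mathbf{Var}_l\Phi(\xi)\},
\]
where $\mathbf{Var}_l$ denotes the variance taken only with respect to $\xi_l$ with the other coordinates fixed, holds for any independent family and is a standard consequence of iterated conditioning together with Jensen's inequality applied to the martingale decomposition of $\Phi - \mathbf{E}\Phi$. This reduces the problem to the one-dimensional inequality $\mathbf{Var}\{h(\xi)\} \le \sigma^2\,\mathbf{E}\{|h'(\xi)|^2\}$ for $\xi \sim N(0,\sigma^2)$.

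The cleanest route to this one-dimensional bound is the Hermite expansion $h(\xi) = \sum_{k \ge 0} c_k H_k(\xi/\sigma)$ in the orthonormal basis of $L^2(N(0,\sigma^2))$; using $H_k' = \sqrt{k}\,H_{k-1}$ and Parseval's identity yields $\mathbf{Var}\{h(\xi)\} = \sum_{k \ge 1} c_k^2$ while $\sigma^2\,\mathbf{E}\{|h'(\xi)|^2\} = \sum_{k \ge 1} k\, c_k^2$, from which the inequality is immediate. Alternatively, one can run an Ornstein--Uhlenbeck semigroup argument, writing $\Phi - \mathbf{E}\Phi = -\int_0^\infty L P_t\Phi\,dt$ and using the intertwining $\partial_l P_t = e^{-t} P_t \partial_l$ together with Part (i) to produce the same bound. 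The only technical point to track is that the polynomial-growth hypothesis ensures $h$ and $h'$ lie in the appropriate $L^2$ spaces, legitimizing the Hermite expansion (or the semigroup manipulations). I do not foresee a serious obstacle here, as both statements are classical; the care needed is essentially bookkeeping of the growth conditions so that integration by parts, Fubini, and the spectral expansion are rigorously justified.
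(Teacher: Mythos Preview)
Your proof proposal is mathematically sound and follows the standard routes to both classical results. However, there is nothing to compare it against: the paper does not prove Proposition~\ref{p:Nash}. Immediately after stating it, the authors write only ``The first formula is a version of the integration by parts. The second is a version of the Poincar\'e inequality (see e.g.\ \cite{Pa-Sh:11}).'' and then proceed to use the proposition as a tool. In other words, the proposition is treated as a known auxiliary fact with a pointer to the literature, not as something proved in the paper.

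So your sketch is correct and more detailed than anything the paper offers for this statement; there is simply no ``paper's own proof'' to set it beside.
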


The first formula is a version of the integration by parts. The second is a
version of the Poincar\'{e} inequality (see e.g. \cite{Pa-Sh:11}).

We have by the resolvent identity and (\ref{diffga})
\begin{equation*}
\mathbf{E}\{G_{jj}(z)\}=-\frac{1}{z}+\frac{1}{zb_{n}}\sum_{k}v_{jk}^{2}%
\mathbf{E}\{D_{jk}(G(z)A_{n})_{jk}\},
\end{equation*}%
where $D_{jk}=\partial /\partial A_{jk}^{(n)}$. It can be shown that
\begin{equation}
D_{jk}(GA_{n})_{jk}=-z\widetilde{G}_{kk}G_{jj}-(GA_{n})_{jk}^{2}  \label{DGA}
\end{equation}%
(see, e.g., \cite{Ly-Pa:08}), where $\widetilde{G}=(\widetilde{M}%
-zI_{n})^{-1}$, $\widetilde{M}=A_{n}^{T}A_{n}$. Hence,
\begin{align}
\mathbf{E}\{G_{jj}(z)\}& =-\frac{1}{z}-\frac{1}{b_{n}}\sum_{k}v_{jk}^{2}%
\mathbf{E}\{\widetilde{G}_{\alpha \alpha }(z)G_{jj}(z)\}+r_{n}(z),
\label{GG} \\
r_{n}(z)& =-\frac{1}{b_{n}}\sum_{k}v_{jk}^{2}\mathbf{E}\{(GA_{n})_{jk}^{2}\}.
\notag
\end{align}%
Since
\begin{equation*}
\sum_{k}\big|(GA_{n})_{jk}^{2}\big|=(GM_{n}\overline{G})_{jj}=((I_{n}+zG)%
\overline{G})_{jj},
\end{equation*}%
then
\begin{equation}
r_{n}=O(b_{n}^{-1}),\quad n\rightarrow \infty .  \label{rn1}
\end{equation}%
We also have by (\ref{Nash}) -- (\ref{DGA})
\begin{align}
\mathbf{Var}\{G_{jj}(z)\}& \leq \sum_{l,k}\mathbf{E}\{|\partial
G_{jj}/\partial a_{lk}|^{2}\}  \label{VG} \\
& \leq \frac{4}{b_{n}}\sum_{l,k}v_{lk}^{2}\mathbf{E}%
\{|(GA_{n})_{jl}G_{jk}|^{2}\}\leq C(z)b_{n}^{-1}.  \notag
\end{align}%
Now it follows from (\ref{GG}) -- (\ref{VG}) that
\begin{equation}
\mathbf{E}\{G_{jj}(z)\}=-\frac{1}{z}-\frac{1}{b_{n}}\sum_{k}v_{jk}^{2}%
\mathbf{E}\{\widetilde{G}_{kk}(z)\}\mathbf{E}\{G_{jj}(z)\}+O(b_{n}^{-1/2}),
\label{1}
\end{equation}%
as $n\rightarrow \infty $. Similarly,
\begin{equation}
\mathbf{E}\{\widetilde{G}_{kk}(z)\}=-\frac{1}{z}-\frac{1}{b_{n}}%
\sum_{p}v_{pk}^{2}\mathbf{E}\{G_{pp}(z)\}\mathbf{E}\{\widetilde{G}%
_{kk}(z)\}+O(b_{n}^{-1/2}),  \label{2}
\end{equation}%
as $n\rightarrow \infty $. Solving system (\ref{1}) -- (\ref{2}), we get (%
\ref{EGjj=}) and then it suffices to use the same argument as that in the
proof of Theorem \ref{t:main} to finish the proof of Theorem \ref{t:indep}.

\section{Triangular matrices}

\label{triang}

In this section we prove Theorem \ref{t:tri}. It follows from (\ref{eq})
that if $v(t)=\chi _{\lbrack 0;1]}(t)$ and $\nu =1/2$, then
\begin{equation}
f(t,z)=-\Big(z-\int_{-1/2}^{t}\big(1+\int_{\tau }^{1/2}f(\theta ,z)d\theta %
\big)^{-1}d\tau \Big)^{-1}.  \label{fttz}
\end{equation}%
Denote
\begin{equation}
\varphi (t,z)=\int_{t}^{1/2}f(\theta ,z)d\theta .  \label{phi}
\end{equation}%
It follows from (\ref{fttz}) -- (\ref{phi}) that%
\begin{equation*}
\varphi ^{\prime \prime }-\varphi ^{\prime 2}(1+\varphi )^{-1}=0,\;\varphi
(1/2,z)=0,\;\varphi ^{\prime }(1/2,z)=z^{-1},
\end{equation*}
where $\varphi ^{\prime }=\partial \varphi /\partial t$. Solving this
system, we get for $f(z)=\varphi (-1/2,z)$: 
\begin{equation}
f(z)=e^{c(z)}-1,\;c(z)e^{c(z)}=-z^{-1},\;z\in \mathbb{C}\backslash \lbrack
0;\infty ).  \label{sys}
\end{equation}%
These equations are equivalent to (\ref{eqtri}). Evidently, there is only
one solution $c$ analytical in $\mathbb{R}\setminus \lbrack 0,\infty )$.

Let us prove (\ref{supp}). As it was firstly shown in \cite{Ma-Pa:67} (see
also \cite{Ba-Si:10,Pa-Sh:11}), to find the support of measure $N$, it
suffices to consider function $x=x(f),f\in \mathbb{R}$, which is the
functional inverse of Stieltjes transform of $N$, and to find set $L\subset
\mathbb{R}$ on which $x$ increases monotonically. Then $\mathrm{supp}\;N=%
\overline{\mathbb{R}\setminus x(L)}$, where $x(L)=\{x(f):\;f\in L\}$.

It follows from (\ref{sys}) that in our case
\begin{equation}
x(f)=-\frac{1}{(1+f)\ln (1+f)},\quad f>-1,\;f\neq 0.  \label{xf}
\end{equation}%
It is easy to find that $x(f)$ increases on $L=[e^{-1},0)\cup (0,\infty )$.
Thus $x(L)=(-\infty ,0)\cup \lbrack e,\infty )$ and $\mathrm{supp}\;N=%
\overline{\mathbb{R}\setminus x(L)}=[0,e]$.

To prove asymptotic relations (\ref{N0}), we first consider
\begin{equation*}
F(x):=f(-x)=\int_{0}^{\infty }\frac{N(d\lambda )}{\lambda +x},\quad x>0.
\end{equation*}%
It is easy to find from (\ref{eqtri}) that
\begin{equation}
F(x)=\frac{1}{x\ln 1/x}(1+o(1)),\quad x\downarrow 0.  \label{F0}
\end{equation}%
This and the Tauberain theorem (see \cite{F:08}, Chapter XIII.5) imply
\begin{equation*}
N(\lambda ):=N([0,\lambda ])=\frac{1}{\ln 1/\lambda }(1+o(1)),\quad
\lambda \downarrow 0.
\end{equation*}%
Differentiating formally this asymptotic formula we obtain the first formula of (%
\ref{N0}).%

To prove this formula rigorously we use (\ref{sys}). Denoting $c(\lambda
+i0)=\xi (\lambda )+i\eta (\lambda )$, we obtain from (\ref{sys}) and (\ref%
{SP})

\begin{equation}
\rho (\lambda )=\frac{\sin ^{2}\eta }{\pi \lambda \eta },\quad \lambda =%
\frac{e^{\eta \cot \eta }\sin \eta }{\eta },\quad \eta \in \lbrack 0,\pi ].
\label{rol}
\end{equation}%
Since the limit $\lambda \downarrow 0$ corresponds to $\eta =\pi -\sigma
,\;\sigma \downarrow 0$, we have from (\ref{rol})  $\ln\lambda =-\pi /\sigma
+O(\ln \sigma),\;\sigma \downarrow 0$ and eventually the first asymptotics of (\ref%
{N0}). The second asymptotics of (\ref{N0}) can be obtained similarly taking
into account that the limit $\lambda \uparrow e$ corresponds to the limit $%
\eta \downarrow 0$.

Let us prove (\ref{moms}). To this end we will use the identity%
\begin{equation*}
\mu _{k}:=\frac{k^{k}}{(k+1)!}=\frac{1}{2\pi ik}\int_{|\zeta |=\varepsilon }%
\frac{e^{k\zeta }}{\zeta ^{k+2}}d\zeta ,\quad \varepsilon >0, \quad k\geq 1.
\end{equation*}%
Consider the generating function%
\begin{equation*}
h(z)=-\sum_{l=0}^{\infty }\frac{\mu _{l}}{z^{l+1}},
\end{equation*}%
which is well defined if $z$ is sufficiently large. We have then the
integral representation%
\begin{equation*}
zh(z)=-1+\frac{1}{2\pi i}\int_{|\zeta |=\varepsilon }\log \left( 1-\frac{%
e^{\zeta }}{\zeta z}\right) \frac{d\zeta }{\zeta ^{2}}
\end{equation*}%
or%
\begin{equation}
(zh(z))^{^{\prime }}=\frac{1}{2\pi iz}\int_{|\zeta |=\varepsilon }\frac{%
d\zeta }{\zeta ^{2}(e^{-\zeta }\zeta z-1)}.  \label{fzp}
\end{equation}%
The both formulas are valid for $|z|>e^{\varepsilon }\varepsilon ^{-1}$,
where the integrands are analytic in $\zeta $ just because the series for
the integrands are convergent. It is easy to find that the function $%
u_{z}(\zeta )=e^{-\zeta }\zeta z-1$ has a simple zero in $\zeta
(z)=z^{-1}(1+o(1)),\;z\rightarrow \infty $, i.e., inside the contour $|\zeta
|=\varepsilon$ if $\varepsilon $ does not depend on $z$. Thus the integral
on the left of (\ref{fzp}) is equal to $z^{-1}$ times the residue of $u_{z}$
at $\zeta (z)$ (i.e., $(\zeta (z)(1-\zeta (z)))^{-1}$) plus the integral
over a sufficiently "small" contour, say $|\zeta |=|2z|^{-1}$. Since $u_{z}$
has no zeros inside this contour, the corresponding integral is just $%
u_{z}^{\prime }(0)=-z$. Putting everything together, we obtain%
\begin{equation*}
(zh(z))^{\prime }=(z\zeta (z)(1-\zeta (z)))^{-1}-1,\;z\zeta (z)e^{-\zeta
(z)}=1.
\end{equation*}%
On the other hand, it follows from (\ref{sys}) that $(zf(z))^{\prime
}=-(zc(z)(1+c(z)))^{-1}-1$. Thus, setting $\zeta =-c$ we obtain that $h$
coincides with $f$, i.e., assertion (iv) of the theorem.

\end{document}